\documentclass[letterpaper, 10 pt, conference]{ieeeconf}  %

\IEEEoverridecommandlockouts                              





\usepackage{cite}
\usepackage{amssymb,amsfonts, mathtools}
\usepackage[table]{xcolor}
\usepackage{bm}
\usepackage{enumitem}
\usepackage{siunitx}
\usepackage{makecell}
\usepackage{graphicx}
\usepackage[colorlinks,
  linkcolor=blue,
  citecolor=blue, urlcolor=.]{hyperref}
\usepackage[capitalise]{cleveref}
\crefname{assumption}{Assumption}{Assumptions}
\crefname{enumi}{Assumption}{}




\usepackage{amsthm}

\newtheoremstyle{ieeeconf}
  {0pt}   
  {0pt}   
  {\normalfont}  
  {\parindent}       
  {\itshape} 
  {:}         
  { } 
  {\thmname{#1} \thmnumber{#2}\thmnote{ (#3)}} 
\makeatletter
\renewenvironment{proof}[1][\proofname]{\par
  \pushQED{\qed}%
  \normalfont \topsep\z@
  \trivlist
  \item[\hskip2em
        \itshape
    #1\@addpunct{:}]\ignorespaces
}{%
  \popQED\endtrivlist\@endpefalse
}
\makeatletter

\theoremstyle{ieeeconf}

\newtheorem{definition}{Definition}
\newtheorem{lemma}{Lemma}
\newtheorem{proposition}{Proposition}
\newtheorem{remark}{Remark}
\newtheorem{theorem}{Theorem}
\newtheorem{corollary}{Corollary}



\DeclareMathOperator{\Tr}{Tr}


\newcommand{\inp}{\boldsymbol{u}}

\newcommand{\R}{\mathbb{R}}

\newcommand{\Prob}{\mathbb{P}}
\newcommand{\E}{\mathbb{E}}
\newcommand{\Q}{\mathbb{Q}}

\newcommand{\Symm}{\mathbb{S}}

\newcommand{\dd}{\mathop{}\!\mathrm{d}}

\newcounter{relctr} 
\everydisplay\expandafter{\the\everydisplay\setcounter{relctr}{0}} 

\AtBeginDocument{} 

\title{\LARGE \bf On the Global Optimality of Linear Policies for \\Sinkhorn Distributionally Robust Linear Quadratic Control
}

\author{Riccardo Cescon, Andrea Martin, and Giancarlo Ferrari-Trecate
\thanks{R. Cescon and G. Ferrari-Trecate
are with the Institute of Mechanical Engineering, EPFL, Switzerland. E-mail addresses: \{riccardo.cescon, giancarlo.ferraritrecate\}@epfl.ch.}
\thanks{A. Martin is with the School of
Electrical Engineering and Computer Science, and Digital Futures, KTH Royal Institute of Technology, Sweden. E-mail address: andrmar@kth.se.}
\thanks{This work was supported as a part of NCCR Automation, a National Centre of Competence in Research, funded by the Swiss National Science Foundation (grant number 51NF40\_225155), and by Digital Futures.}
}

\begin{document}

\maketitle

\thispagestyle{empty}
\pagestyle{empty}

\begin{abstract}
The Linear Quadratic Gaussian (LQG) regulator is a cornerstone of optimal control theory, yet its performance can degrade significantly when the noise distributions deviate from the assumed Gaussian model. To address this limitation, this work proposes a distributionally robust generalization of the finite-horizon LQG control problem. Specifically, we assume that the noise distributions are unknown and belong to ambiguity sets defined in terms of a Sinkhorn discrepancy centered at a nominal Gaussian distribution. By deriving novel bounds on this entropy-regularized Wasserstein distance and proving structural and topological properties of the resulting ambiguity sets, we establish global optimality of linear policies for Sinkhorn distributionally robust LQG. Numerical experiments showcase improved distributional robustness of our control policy.

\end{abstract}
\section{Introduction}
The theory of Linear Quadratic Gaussian (LQG) regulators addresses the fundamental problem of controlling partially-observed linear systems driven by additive Gaussian noise with the objective of minimizing an expected quadratic cost \cite{bertsekas2012dynamic}. This problem admits an elegant closed-form solution, combining a Kalman filter with a linear state-feedback controller, and has found application in a variety of domains ranging from engineering to economics and computer science.

In the presence of model misspecifications, however, the LQG solution can be extremely fragile \cite{doyle1978guaranteed}. Classical $\mathcal{H}_\infty$ control \cite{zhou1998essentials} addresses this concern by shifting from a stochastic to an adversarial uncertainty model and minimizing the worst-case cost across bounded-energy disturbances. While provably robust, $\mathcal{H}_\infty$ methods tend to be overly conservative in practice as they optimize for the least favorable uncertainty realization. Motivated by this observation, several approaches have been proposed to balance nominal performance and robustness, including mixed $\mathcal{H}_2/\mathcal{H}_\infty$ formulations \cite{bernstein1988lqg, doyle1989optimal}, risk-sensitive control \cite{jacobson1973optimal}, and regret minimization methods \cite{martin2025guarantees, goel2023regret, martin2024regret}.

Among these approaches, recent work on distributionally robust (DR) control promises to combine the advantages of stochastic and adversarial uncertainty models by \emph{robustifying in the probability space}. To achieve this, the DR control paradigm considers the problem of minimizing the expected cost under the most averse distribution within a given ambiguity set—a set of distributions that are sufficiently close, in an appropriate sense, to a nominal one. For instance, \cite{van2015distributionally} studied DR control of constrained stochastic systems with ambiguity sets comprising all distributions sharing the same first two moments. To account for full distributional information, the works \cite{petersen2000minimax, falconi2025distributionally, fochesato2025distributionally} instead employ ambiguity sets defined in terms of $f$-divergence, whereas \cite{taskesen2023distributionally, lanzetti2024optimality, yang2020wasserstein, brouillon2025distributionally, aolaritei2023wasserstein, kargin2024infinite} rely on optimal transport metrics in light of their proven expressiveness and out-of-distribution guarantees.\looseness-1

Inspired by these results and motivated by the recent application of the Sinkhorn discrepancy for DR learning \cite{sinkhorn, genevay2018learning} and control \cite{cescon2025data}, we study a generalization of the finite-horizon LQG control problem, where the noise distributions are unknown and belong to Sinkhorn ambiguity sets centered at nominal Gaussian distributions. Our main contribution is to establish the global optimality of linear policies for the Sinkhorn DR LQG problem. While our proof follows a ``sandwich'' argument akin to \cite{taskesen2023distributionally}, the extension from Wasserstein DR LQG to Sinkhorn DR LQG is far from direct: the introduction of the KL regularization term fundamentally alters the geometry of the ambiguity set\footnote{For instance, it is well-known that the worst-case distribution in Wasserstein DR optimization is finitely supported when so is the center distribution \cite{gao2023distributionally}, while the Sinkhorn ambiguity set with a Gaussian reference measure only contains continuous distributions.} and invalidates several key inequalities used in \cite{taskesen2023distributionally}. To recover a comparable optimality result, we develop new tools specific to the Sinkhorn framework. In particular,
we first construct a novel \emph{lower bound} to our DR LQG problem leveraging results from regularized optimal transport \cite{del2020statistical};
second, we establish a Gelbrich-type inequality for the Sinkhorn discrepancy, which is not available in the literature and is essential for deriving an \emph{upper bound} to our DR LQG problem; last, we prove convexity and compactness of the resulting entropy-regularized Gelbrich ambiguity set. Taken together, these results allow us to conclude that the Sinkhorn DR LQG admits a Nash equilibrium in the class of linear policies.



Alongside \cite{taskesen2023distributionally, lanzetti2024optimality, fochesato2025distributionally, falconi2025distributionally}, our work contributes to delineating scenarios where, despite the additional complexity introduced by DR formulations, linear feedback policies remain globally optimal for LQG control problems.

\textit{Notation}: Throughout the paper, we denote the set of (zero-mean) probability distributions supported on a measurable set $\mathcal{Z}$ by $\mathcal{P}(\mathcal{Z})$ (resp. $\mathcal{P}_0(\mathcal{Z})$). We write $\mu \ll \nu$ to denote that a measure $\mu$ is absolutely continuous with respect to $\nu$. The convolution product between two probability measures is represented by $\mu*\nu$. For $n \in \mathbb{N}$, we write $[n]$ to denote the set of indices $\{0,\dots, n-1\}$. The space of all $d\times d$ positive (semi)definite matrices is denoted by \( \mathbb{S}_{++}^d \) (resp. \( \mathbb{S}^d_+ \)). We denote by $\|\cdot\|$ the Euclidean norm. The determinant of a square matrix $A$ is denoted by $|A|$. Given $A \in \Symm^{d}$, we denote by $\{\lambda_i(A)\}_{i = 1}^d$ its eigenvalues and let $\lambda_{\text{max}}(A) = \max_i \lambda_i(A)$.

\section{Preliminaries}
We begin by recalling definitions of discrepancies between probability distributions that will be used throughout the paper.
\begin{definition}[KL divergence, {\cite[Definition 2.8]{kuhn2025distributionally}}]
Given $\Prob, \Q\in\mathcal{P}(\R^d)$ with $\Prob\ll \Q$, the Kullback-Leibler (KL) divergence between $\Prob$ and $\Q$ is
    \begin{equation*}
    \mathrm{KL}(\Prob\|\Q) = \E_\Prob\left[\log\left(\frac{\dd\Prob(x)}{\dd\Q(x)}\right)\right].
\end{equation*}
\end{definition}
\begin{definition}[Sinkhorn discrepancy, {\cite[Definition 1]{sinkhorn}}]
Let $\Prob, \Q\in\mathcal{P}(\R^d)$ and $\mu, \nu$ be reference probability measures over $\R^d$ 
such that $\Prob \ll \mu$ and $\Q \ll \nu$. For any $\epsilon \geq 0$, the Sinkhorn discrepancy 
between $\Prob$ and $\Q$ is defined as
\begin{equation}
\label{eq:sinkhorn}
    W_{\epsilon}(\Prob, \Q) = \! \inf_{\gamma \in \Gamma(\Prob, \Q)}\left\{\E_\gamma [\|x-y\|^2] + \\ 
    \epsilon \mathrm{KL}(\gamma\|\mu\times\nu )\right\},
\end{equation}
where $\Gamma(\Prob, \Q)$ denotes the set of all couplings $\gamma$ between $\Prob$ and $\Q$, that is, the set of all joint distributions with marginals $\Prob$ and $\Q$.
\end{definition}
As noted in \cite[Remark 2]{sinkhorn}, any choice of $\Prob\ll\mu$ in \eqref{eq:sinkhorn} is equivalent up to a constant. Hence, as in our DR LQG problem the center distribution is assumed to be known, we let $\mu=\Prob$ without loss of generality. Aligned with \cite{sinkhorn, cescon2025data}, in the following we further assume $\nu\sim\mathcal{N}(0, \Sigma)$; this choice ensures that $W_{\epsilon}(\Prob, \Q)$ is finite for any $\Q$ in the ambiguity set $\{\Q \in \mathcal{P}_0(\mathbb{R}^d): W_\epsilon(\Prob, \Q) \leq \rho\}$, independently of its support.

\begin{remark}
\label{remark:equivalence}
    Other definitions of Sinkhorn discrepancy have also been considered in the literature \cite{del2020statistical, janati2020entropic, mallasto2022entropy}. For instance, \cite{del2020statistical} regularizes the transport cost with the negative differential entropy of the transport plan $\gamma$, while \cite{janati2020entropic, mallasto2022entropy} use \eqref{eq:sinkhorn} with $\mu=\Prob$ and $\nu=\Q$. Crucially, all these definitions lead to the same optimal transport plan $\gamma^\star$. In fact, one can observe that 
    \begin{equation*}
        \mathrm{KL}(\gamma\|\mu\times\nu) = \mathrm{KL}(\gamma\|\Prob\times\Q) + \mathrm{KL}(\Prob\times\Q\|\mu\times\nu)\,;
    \end{equation*}
    hence, $\mathrm{KL}(\gamma\|\mu\times\nu)$ and $\mathrm{KL}(\gamma\|\Prob\times\Q)$ are equivalent up to a term that is independent of $\gamma$. A similar reasoning applies when the negative differential entropy is used as regularization term \cite{del2020statistical}.   
\end{remark}

\begin{definition}
\label{def:entropy_reg_gelbrich}
    Let $\nu\sim\mathcal{N}(0, \Sigma)$ where $\Sigma \in \Symm^d_{++}$. The entropy-regularized Gelbrich divergence between two probabilities $\mathbb{P}_1, \mathbb{P}_2 \in \mathcal{P}_0(\mathbb{R}^d)$ with covariance matrices $\Sigma_1, \Sigma_2\in\Symm_{++}^d$ is
    \begin{align}
    G_\epsilon(\Sigma_1, \Sigma_2)\! &=\! \Tr(\Sigma_1)\! +\! \Tr(\Sigma_2)\! -\! 2\Tr(D_\epsilon) 
    \!+\! \frac{\epsilon}{2}\Tr\left(\Sigma^{-1}\Sigma_2\right)\nonumber \\
    &+ \frac{\epsilon}{2}\log\frac{|\Sigma|}{|\Sigma_2|} 
    \!+\! \frac{\epsilon}{2}\log\left(\left(\frac{2}{\epsilon}\right)^d \left|D_\epsilon + \frac{\epsilon}{4}I\right|\right),\label{eq:entropy_reg_gelbrich}
    \end{align}
where $D_\epsilon = \left(\Sigma_1^\frac{1}{2}\Sigma_2\Sigma_1^\frac{1}{2} + \frac{\epsilon^2}{16}I\right)^\frac{1}{2}$.
\end{definition}

Unlike \eqref{eq:sinkhorn}, \cref{def:entropy_reg_gelbrich} only accounts for the covariances of $\mathbb{P}_1, \mathbb{P}_2 \in \mathcal{P}_0(\mathbb{R}^d)$. In \cref{sec:analysis}, we will exploit connections between \eqref{eq:sinkhorn} and \eqref{eq:entropy_reg_gelbrich} to construct finite-dimensional upper and lower bounds to our Sinkhorn DR LQG problem.
    
We conclude this section by observing that \eqref{eq:sinkhorn} does not define a metric, as it does not satisfy the identity of indiscernibles. In fact, the minimum of $W_\epsilon(\Prob, \Q)$ over $\Q$ is attained at $\Prob * \mathcal{N}(0, \tfrac{\epsilon}{2}I)$, as shown in \cite[Theorem 2.4]{del2020statistical}. Interestingly, this minimum is non-zero and is not achieved by $\Prob$ itself. In the case $\Prob \sim \mathcal{N}(0,\hat{\Sigma})$ that we consider throughout the paper, the minimum of \eqref{eq:sinkhorn} over $\Q$ is given by \looseness=-1
\begin{equation}
\label{eq:smallest_radius}
    \underline{\rho} = \tfrac{\epsilon}{2}\big(\Tr\!\big(\Sigma^{-1}(\hat{\Sigma} + \tfrac{\epsilon}{2}I)\big) - d + \log|\Sigma| - d\log(\tfrac{\epsilon}{2})\big)\,.
\end{equation}
This value represents the smallest radius $\rho$ such that the ambiguity set $\{\Q \in \mathcal{P}_0(\mathbb{R}^d): W_\epsilon(\Prob, \Q) \leq \rho\}$ is non-empty. Last, we remark that, when $\epsilon\rightarrow 0$, Definition~\ref{def:entropy_reg_gelbrich} is equivalent to the squared Gelbrich distance, see \cite[Theorem 2.1]{gelbrich1990formula}.
\section{Problem Formulation}
We consider discrete-time linear dynamical systems described by the following state-space equations
\begin{equation}
\label{eq:model}
x_{t+1} =A_t x_{t}+B_t u_{t}+w_{t}\,, \quad 
y_{t} =C_t x_{t}+v_{t}\,, ~ \forall t \in [T]\,,
\end{equation}
where $x_t\in\R^d$ denotes the state vector, $u_t\in\R^m$ the control input, $y_t\in\R^p$ the output, $w_t\in\R^d$ the process noise, $v_t\in\R^p$ the measurement noise, and $T \in \mathbb{N}$ the control horizon. 

For ease of presentation, we collect all exogenous random vectors in the variable $\delta = (x_0, w_0, \dots, w_{T-1}, v_0, \dots, v_{T-1})$. We assume that all entries of $\delta$ are mutually independent.
Differently from classical LQG theory, however, we assume that their true distributions are unknown and belong to a Sinkhorn ambiguity set $\mathcal{S}$ centered at a nominal distribution $\hat{\Prob} = \hat{\Prob}_{x_0} \otimes \left(\otimes_{t=0}^{T-1} \hat{\Prob}_{w_t}\right) \otimes \left(\otimes_{t=0}^{T-1} \hat{\Prob}_{v_t}\right)$, with $\hat{\Prob}_{x_0}=\mathcal{N}(0, \hat{X}_0)$, $\hat{\Prob}_{w_t}=\mathcal{N}(0, \hat{W}_t)$, and $\hat{\Prob}_{v_t}=\mathcal{N}(0, \hat{V}_t)$, for all $t\in[T]$.\footnote{Following \cite{taskesen2023distributionally, lanzetti2024optimality, fochesato2025distributionally, falconi2025distributionally}, we consider only distributions with zero mean; our results extend to the non-zero mean case with minor modifications.} Specifically, for any regularization parameter $\epsilon\geq0$ and user-defined radii $\rho_{x_0}\geq \underline{\rho}_{x_0}, \rho_{w_t}\geq \underline{\rho}_{w_t}, \rho_{v_t}\geq \underline{\rho}_{v_t}$, where $\underline{\rho}_{x_0}, \underline{\rho}_{w_t}$, and $\underline{\rho}_{v_t}$ are defined according to \eqref{eq:smallest_radius} with $\hat{X}_0$, $\hat{W}_t$, and $\hat{V}_t$ in place of $\hat{\Sigma}$, respectively, we define the ambiguity set $\mathcal{S}$ as $\mathcal{S}_{x_0}\otimes \left(\otimes_{t=0}^{T-1} \mathcal{S}_{w_t}\right) \otimes \left(\otimes_{t=0}^{T-1} \mathcal{S}_{v_t}\right)$
where
\begin{equation*}
    \begin{aligned}
        \mathcal{S}_{x_{0}}&=\left\{\mathbb{P}_{x_{0}} \in \mathcal{P}_0(\mathbb{R}^{d}): W_{\epsilon}(\hat{\mathbb{P}}_{x_{0}}, \mathbb{P}_{x_{0}}) \leq \rho_{x_{0}}\right\}\,,
        \\
        \mathcal{S}_{w_{t}}&=\left\{\mathbb{P}_{w_{t}} \in \mathcal{P}_0(\mathbb{R}^{d}): W_{\epsilon}(\hat{\mathbb{P}}_{w_{t}}, \mathbb{P}_{w_{t}}) \leq \rho_{w_{t}}\right\}\,,
        \\
        \mathcal{S}_{v_{t}}&=\left\{\mathbb{P}_{v_{t}} \in \mathcal{P}_0(\mathbb{R}^{p}): W_{\epsilon}(\hat{\mathbb{P}}_{v_{t}},\mathbb{P}_{v_{t}}) \leq \rho_{v_{t}}\right\}\,.
    \end{aligned}
\end{equation*}
Given a realization $\delta$ of the exogenous vectors and a collection of causal measurable\footnote{Throughout the paper, we tacitly assume that the probability space of the exogenous signals is equipped with the standard Borel $\sigma$-algebra.} functions $\pi_t:\R^{p(t+1)}\rightarrow\R^m$ mapping output observations to control inputs as per $u_t = \pi_t(y_{0:t})$, we define the cost incurred by the policy $\pi = (\pi_0, \dots, \pi_{T-1})$ as
\begin{equation*}
    J(\pi, \delta) = \sum_{t=0}^{T-1} (x_t^\top Q_t x_t + u_t^\top R_t u_t) + x_T^\top Q_T x_T\,,
\end{equation*}
where $Q_t, Q_T \in \Symm^d_+$ and $R_t\in\Symm^m_{++}$ represent state and input weight matrices, respectively. With these definitions in place, we formulate the Sinkhorn DR LQG control problem as
\begin{equation}
\label{eq:DR-LQG}
    \inf _{\pi\in\mathcal{U}_y}\ \max_{\mathbb{P} \in \mathcal{S}} \ \mathbb{E}_{\Prob}[J(\pi, \delta)]\,,
\end{equation}
where $\mathcal{U}_y$ denotes the set of all feasible control inputs $ \inp =(u_0, \dots, u_{T-1})$. In particular, \eqref{eq:DR-LQG} can be interpreted as a zero-sum game between the control designer, who selects a causal policy to minimize the expected cost, and an adversary, who chooses the noise distributions in $\mathcal{S}$ that maximize such cost. 

\section{Analysis of the Sinkhorn DR LQG problem}
\label{sec:analysis}
We now present our main results. We first re-parametrize \eqref{eq:DR-LQG} in terms of the purified observations \cite[Section 14.4.2]{ben2009robust}. Then, we construct two auxiliary problems over the class of linear policies that provide lower and upper bounds to \eqref{eq:DR-LQG}. Last, using a ``sandwich" argument, we show that the optimal value of these two auxiliary problems coincide, implying that \eqref{eq:DR-LQG} admits a Nash equilibrium and that linear policies are globally optimal.\looseness-1

\subsection{Purified output re-parametrization}
\label{sec:purified-output}
To ease analysis of \eqref{eq:DR-LQG}, we rewrite $u_t$ in terms of the purified observations instead of the actual observations $y_{0:t}$. To define these new variables, we introduce a noise-free copy of \eqref{eq:model} as
\begin{equation}
    \hat{x}_{t+1} = A_t \hat{x}_{t}+B_t u_{t},\quad \hat{y}_{t} = C_t \hat{x}_{t},~ \forall t\in[T],
    \label{eq:noise-free}
\end{equation}
with state $\hat{x}_t\in\R^d$ and output $\hat{y}_t\in\R^p$, initialized at $\hat{x}_0=0$ and with the same input $u_t$ as the 
original system. We then define the \emph{purified output} $\eta_t$ at time \( t \) as \( \eta_t = y_t - \hat{y}_t \), and let \( \bm{\eta} = (\eta_0, \dots, \eta_{T-1}) \). This representation proves useful because, as shown in 
 \cite[Proposition II.1]{hadjiyiannis2011efficient}, every measurable function of \( y_0, \dots, y_t \) can be equivalently expressed as a measurable function of \( \eta_0, \dots, \eta_t \), and vice versa—yet differently from $\mathbf{y} = (y_0, \dots, y_{T-1})$, \(\bm\eta\) is independent of the inputs. In fact, using \eqref{eq:model} and \eqref{eq:noise-free} recursively, one can show that \(\bm\eta\) only depends on the exogenous vectors $\mathbf{w} = (x_0, w_0, \dots, 
w_{T-1})$ and $\mathbf{v} = (v_0, \dots, v_{T-1})$. In particular, it holds that $\bf \bm{\eta} = D w + v$, where $\bf D = CG$ and $\bf C$ and $\bf G$ are matrices defined in the Appendix. 

In light of this, we have that \( \mathcal{U}_y = \mathcal{U}_\eta \), where \( \mathcal{U}_\eta \) denotes the set of input sequences \( \mathbf{u} = (u_0, \dots, u_{T-1}) \) for which there exist measurable functions \( \tilde{\pi}_t: \mathbb{R}^{p(t+1)} \to \mathbb{R}^m \) satisfying \( u_t = \tilde{\pi}_t(\eta_{0:t}) \). Hence, we equivalently rewrite problem \eqref{eq:DR-LQG} as
\begin{equation}
    \label{eq:primal}
    p^\star =
    \left\{
    \begin{aligned}
    &\min_{\bf u}\ \max_{\mathbb{P} \in \mathcal{S}}\ \mathbb{E}_{\mathbb{P}}\left[\bf u^{\top} R u + x^{\top} Q x \right]
    \\
    &\text { s.t. }\ \mathbf{u} \in \mathcal{U}_{\eta}, \quad \bf x = H u + G w,
    \end{aligned}
    \right.
\end{equation}
where $\mathbf{x} = (x_0, \dots, x_T)$ and $\bf R, Q, H$ are suitable matrices defined in the Appendix.


With the objective of establishing the existence of a Nash equilibrium, we also define the dual problem of \eqref{eq:primal} as
\begin{equation}
    \label{eq:dual}
    d^\star =
    \left\{
    \begin{aligned}
    &\max_{\mathbb{P} \in \mathcal{S}}\ \min_{\bf u}\ \mathbb{E}_{\mathbb{P}}\left[\bf u^{\top} R u + x^{\top} Q x \right]
    \\
    &\text { s.t. }\ \mathbf{u} \in \mathcal{U}_{\eta}, \quad \bf x = H u + G w.
    \end{aligned}
    \right.
\end{equation}
Classical min-max inequality states that $d^\star \leq p^\star$. In the next sections, we prove that such relation actually holds with equality—despite the fact that $\mathcal{U}_\eta$ is an infinite-dimensional function space and $\mathcal{S}$ is an infinite-dimensional set of non-parametric probability distributions. In particular, our analysis reveals that there exists a Nash equilibrium of the zero-sum game \eqref{eq:DR-LQG} in the form $(\mathbf{u}^\star, \Prob^\star)$, where $\bf u^\star = U^\star\bm\eta + q^\star$ is an affine policy of $\bm\eta$ and $\Prob^\star$ is Gaussian. To do so, we first construct a lower-bound to \eqref{eq:dual} and an upper-bound to \eqref{eq:primal}, and then show that their optimal values coincide.

\subsection{Construction of a lower bound for \texorpdfstring{$d^\star$}{d*}}
To derive a lower bound for the dual problem \eqref{eq:dual}, we restrict our attention to the set \( \mathcal{S}_{\mathcal{N}} \subseteq \mathcal{S} \) of Gaussian distributions contained in \( \mathcal{S} \). This leads us to the optimization problem
\begin{equation}
    \label{eq:lower_bound}
    \underline{d}^\star =
    \left\{
    \begin{aligned}
    &\max_{\mathbb{P} \in \mathcal{S_N}}\ \min_{\bf u}\ \mathbb{E}_{\mathbb{P}}\left[\bf u^{\top} R u + x^{\top} Q x \right]
    \\
    &\text { s.t. }\ \mathbf{u} \in \mathcal{U}_{\eta}, \quad \bf x = H u + G w.
    \end{aligned}
    \right.
\end{equation}
Compared to \eqref{eq:dual}, in \eqref{eq:lower_bound} we restricted the feasible set in the outer maximization. Hence, we have $\underline{d}^\star \leq d^\star$. On the other hand, \eqref{eq:lower_bound} is still an infinite-dimensional optimization problem. In the remainder of this section, we show that this problem can be reformulated as a finite-dimensional one by exploiting known closed-form expressions for the Sinkhorn discrepancy between two normal distributions.
\begin{proposition}(Tightness for normal distributions). 
For any \( \mathbb{P}_1 \sim \mathcal{N}(0, \Sigma_1) \) and \( \mathbb{P}_2 \sim \mathcal{N}(0, \Sigma_2) \) with $\Sigma_1, \Sigma_2 \in \Symm_{++}^d$, the optimal coupling for the entropy-regularized problem \eqref{eq:sinkhorn} is Gaussian and is given by
$\gamma_0 \sim \mathcal{N}\left(0, \begin{bmatrix} 
\Sigma_1 & \Sigma_1 X_\epsilon \\
X_\epsilon \Sigma_1 & \Sigma_2
\end{bmatrix} \right)$,
where
\begin{equation*}
    X_\epsilon = \Sigma_1^{-\frac{1}{2}} \left( \Sigma_1^{\frac{1}{2}} \Sigma_2 \Sigma_1^{\frac{1}{2}} + \frac{\epsilon^2}{16} I \right)^{\frac{1}{2}} \Sigma_1^{-\frac{1}{2}} - \frac{\epsilon}{4} \Sigma_1^{-1}\,.
\end{equation*}
Moreover, it holds that $W_{\epsilon}(\mathbb{P}_1, \mathbb{P}_2) = G_\epsilon(\Sigma_1, \Sigma_2)$, that is, the Sinkhorn divergence coincides with the entropy-regularized Gelbrich divergence.
\label{proposition:tightness}
\end{proposition}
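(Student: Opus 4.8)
The plan is to convert the infinite-dimensional entropic optimal transport problem in \eqref{eq:sinkhorn} into a finite-dimensional matrix program by exploiting Gaussianity of the two marginals and of the reference measures, and then to solve that program in closed form. Throughout I keep the running convention $\mu=\mathbb{P}_1$, $\nu\sim\mathcal{N}(0,\Sigma)$. The first step is to change the reference appearing in the $\mathrm{KL}$ term: using the decomposition recalled in \cref{remark:equivalence}, for every $\gamma\in\Gamma(\mathbb{P}_1,\mathbb{P}_2)$ one has $\mathrm{KL}(\gamma\|\mathbb{P}_1\times\nu)=\mathrm{KL}(\gamma\|\mathbb{P}_1\times\mathbb{P}_2)+\mathrm{KL}(\mathbb{P}_2\|\nu)$, the second term being independent of $\gamma$. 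Hence
\begin{equation*}
W_{\epsilon}(\mathbb{P}_1,\mathbb{P}_2)=\Bigl(\inf_{\gamma\in\Gamma(\mathbb{P}_1,\mathbb{P}_2)}\E_\gamma[\|x-y\|^2]+\epsilon\,\mathrm{KL}(\gamma\|\mathbb{P}_1\times\mathbb{P}_2)\Bigr)+\epsilon\,\mathrm{KL}(\mathbb{P}_2\|\nu),
\end{equation*}
so the optimal coupling coincides with the minimizer of the bracketed \emph{independent-reference} entropic problem (in particular it does not depend on $\Sigma$), while the last term is the closed-form $\mathrm{KL}$ divergence between two centered Gaussians.

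Next I argue that the minimizer of the bracketed problem is Gaussian. It is a convex program (linear transport cost plus strictly convex $\mathrm{KL}$) over the convex set $\Gamma(\mathbb{P}_1,\mathbb{P}_2)$, hence has a unique solution; and since (i) the transport cost depends on $\gamma$ only through its second moments and (ii) among all laws on $\R^{2d}$ with prescribed mean and covariance the Gaussian uniquely minimizes the $\mathrm{KL}$ divergence to any fixed Gaussian (a standard consequence of the Gaussian maximizing differential entropy), replacing a feasible $\gamma$ by the Gaussian with the same first two moments never increases the objective. Restricting to centered Gaussian couplings with joint covariance $\Sigma_\gamma=\begin{bmatrix}\Sigma_1 & C\\ C^\top & \Sigma_2\end{bmatrix}$, the problem reduces to minimizing $-2\Tr(C)-\tfrac{\epsilon}{2}\log\bigl|\Sigma_2-C^\top\Sigma_1^{-1}C\bigr|$ (up to a $C$-independent constant) over $\{C:\Sigma_2-C^\top\Sigma_1^{-1}C\succ0\}$, which is convex and coercive on its bounded feasible set. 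Setting the gradient to zero gives $\Sigma_1^{-1}C=\tfrac{2}{\epsilon}\bigl(\Sigma_2-C^\top\Sigma_1^{-1}C\bigr)$; symmetry of the right-hand side forces $X:=\Sigma_1^{-1}C$ to be symmetric, so the equation becomes $X\Sigma_1X+\tfrac{\epsilon}{2}X-\Sigma_2=0$, and congruence by $\Sigma_1^{1/2}$ together with completing the square turns it into $\bigl(\Sigma_1^{1/2}X\Sigma_1^{1/2}+\tfrac{\epsilon}{4}I\bigr)^2=\Sigma_1^{1/2}\Sigma_2\Sigma_1^{1/2}+\tfrac{\epsilon^2}{16}I=D_\epsilon^2$. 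Taking the positive definite square root — the only branch for which $\Sigma_2-C^\top\Sigma_1^{-1}C\succ0$ — yields $X=X_\epsilon$ and $C=\Sigma_1X_\epsilon$, i.e. exactly the announced $\gamma_0$. (Alternatively, this closed form of the Gaussian entropic plan may be quoted from \cite{del2020statistical, janati2020entropic, mallasto2022entropy} and merely verified.)

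It then remains to evaluate the objective at $\gamma_0$. Using $\Sigma_1X_\epsilon=\Sigma_1^{1/2}D_\epsilon\Sigma_1^{-1/2}-\tfrac{\epsilon}{4}I$, one gets $\E_{\gamma_0}[\|x-y\|^2]=\Tr(\Sigma_1)+\Tr(\Sigma_2)-2\Tr(D_\epsilon)+\tfrac{\epsilon d}{2}$; a Schur-complement computation (together with $\Sigma_2-X_\epsilon\Sigma_1X_\epsilon=\tfrac{\epsilon}{2}X_\epsilon$) gives $|\Sigma_{\gamma_0}|=(\tfrac{\epsilon}{2})^d\bigl|D_\epsilon-\tfrac{\epsilon}{4}I\bigr|$, hence $\epsilon\,\mathrm{KL}(\gamma_0\|\mathbb{P}_1\times\mathbb{P}_2)=\tfrac{\epsilon}{2}\bigl(\log|\Sigma_1|+\log|\Sigma_2|-d\log\tfrac{\epsilon}{2}-\log\bigl|D_\epsilon-\tfrac{\epsilon}{4}I\bigr|\bigr)$; and $\epsilon\,\mathrm{KL}(\mathbb{P}_2\|\nu)=\tfrac{\epsilon}{2}\bigl(\Tr(\Sigma^{-1}\Sigma_2)-d+\log|\Sigma|-\log|\Sigma_2|\bigr)$. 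Summing the three contributions and simplifying via the identity $\bigl|D_\epsilon-\tfrac{\epsilon}{4}I\bigr|\,\bigl|D_\epsilon+\tfrac{\epsilon}{4}I\bigr|=\bigl|D_\epsilon^2-\tfrac{\epsilon^2}{16}I\bigr|=\bigl|\Sigma_1^{1/2}\Sigma_2\Sigma_1^{1/2}\bigr|=|\Sigma_1|\,|\Sigma_2|$ reproduces \eqref{eq:entropy_reg_gelbrich} verbatim, so $W_\epsilon(\mathbb{P}_1,\mathbb{P}_2)=G_\epsilon(\Sigma_1,\Sigma_2)$. I expect the main obstacles to be making the Gaussianity reduction fully rigorous (integrability of $\|x-y\|^2$ under the matched-moment Gaussian plan, and the strict-convexity/uniqueness argument) and the determinant/trace bookkeeping in the last step, where the $D_\epsilon\pm\tfrac{\epsilon}{4}I$ identity is exactly what makes the many $\log|\cdot|$ terms collapse; a sign slip there would produce the wrong closed form. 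The degenerate case $\epsilon=0$ is handled separately by continuity, recovering the classical Gelbrich optimal coupling.
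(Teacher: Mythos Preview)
Your argument is correct. The route, however, differs from the paper's. The paper's proof is essentially a two-liner: it invokes \cref{remark:equivalence} to note that all the standard entropic regularizers share the same optimal plan, then simply \emph{quotes} the closed form of $\gamma_0$ from \cite[Theorem~2.2]{del2020statistical}; after that it evaluates $\E_{\gamma_0}[\|x-y\|^2]$ and $\mathrm{KL}(\gamma_0\|\mathbb{P}_1\times\nu)$ directly (without first passing to the $\mathbb{P}_1\times\mathbb{P}_2$ reference) and recognizes the sum as \eqref{eq:entropy_reg_gelbrich} ``by inspection.'' You instead (i) shift the reference to $\mathbb{P}_1\times\mathbb{P}_2$ to isolate the $\gamma$-independent term $\epsilon\,\mathrm{KL}(\mathbb{P}_2\|\nu)$, (ii) give a self-contained Gaussianity-reduction argument, (iii) solve the resulting quadratic matrix equation for the cross-covariance, and (iv) evaluate via the Schur complement and the determinant identity $|D_\epsilon-\tfrac{\epsilon}{4}I|\,|D_\epsilon+\tfrac{\epsilon}{4}I|=|\Sigma_1|\,|\Sigma_2|$. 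The trade-off is clear: the paper's proof is shorter and leans on the literature, while yours is longer but independent of \cite{del2020statistical} and makes transparent \emph{why} $X_\epsilon$ has its particular form, and exactly which algebraic identity collapses the log-determinants into $\log\bigl((\tfrac{2}{\epsilon})^d|D_\epsilon+\tfrac{\epsilon}{4}I|\bigr)$. Your parenthetical remark that one may alternatively ``quote from \cite{del2020statistical, janati2020entropic, mallasto2022entropy} and merely verify'' is precisely what the paper does.
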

\begin{proof}
As observed in \cref{remark:equivalence}, regularizing \eqref{eq:sinkhorn} with the negative differential entropy or a KL term leads to the same optimal transport plan. Hence, the expression for $\gamma_0$ follows from \cite[Theorem 2.2]{del2020statistical}. Substituting this optimal coupling in the definition of Sinkhorn discrepancy in \eqref{eq:sinkhorn}, we obtain 
\begin{align*}
    \E_{\gamma_0}[\|x-y\|^2] &= \Tr(\Sigma_1) + \Tr(\Sigma_2) - 2\Tr(\Sigma_1X_\epsilon)
    \\
    \mathrm{KL}(\gamma_0 \| \Prob_1\times\nu) &= \frac{1}{2}\left[\Tr(\Sigma^{-1}\Sigma_2) - d + \log\frac{|\Sigma|}{|\Sigma_2|} + \right.
    \\
    &\left.+\log\left(\left(\frac{2}{\epsilon}\right)^d\left|D_\epsilon + \frac{\epsilon}{4}I\right|\right)\right].
\end{align*}
By inspection, combining the expressions above as per \eqref{eq:sinkhorn} yields \eqref{eq:entropy_reg_gelbrich}, which concludes the proof.
\end{proof}
Let us define the matrices $\mathbf{M} = \mathbf{R}+\mathbf{H}^{\top} \mathbf{Q H} \in \mathbb{R}^{m T}$, $\mathbf{F}_1 = \mathbf{D}^\top\mathbf{U}^\top\mathbf{R}\mathbf{U}\mathbf{D} +(\mathbf{HUD}+\mathbf{G})^\top\!\mathbf{Q}(\mathbf{HUD} + \mathbf{G}) \in \mathbb{S}_+^{d(T+1)}$, and $\mathbf{F}_{2}=\mathbf{U}^{\top} \mathbf{R} \mathbf{U}+\mathbf{U}^{\top} \mathbf{H}^{\top} \mathbf{Q H U} \in \mathbb{S}_+^{p T}$ for brevity. Moreover, we denote the set of causal feedback matrices by $\mathcal{U}^\text{lin}$. With this notation in place, we are now ready to reformulate \eqref{eq:lower_bound} as a finite-dimensional optimization problem. 

\begin{proposition}
\label{proposition:trace_rewriting}
The lower bound \eqref{eq:lower_bound} to the dual problem \eqref{eq:dual} is equivalent to the finite-dimensional optimization problem
\begin{equation}
\label{eq:trace_lowerbound}
    \underline{d}^\star = \max _{\substack{\mathbf{W}\in\mathcal{G}_W\\ \mathbf{V}\in\mathcal{G}_V}}\ \min _{\substack{\mathbf{U}\in\mathcal{U}^{\text{lin}}\\ \mathbf{q}\in\R^{mT}}}\ \Tr\left(\mathbf{F}_{1} \mathbf{W}+\mathbf{F}_{2} \mathbf{V}\right)+\mathbf{q}^{\top} \mathbf{M} \mathbf{q},
    \end{equation}
    where $\mathbf{F}_1$ and $\mathbf{F}_2$ depend on $\bf U$, and the finite-dimensional sets $\mathcal{G}_W$ and $\mathcal{G}_V$ are defined as
    \begin{align}
        \mathcal{G}_{W} &= \big\{ \mathbf{W} \in \Symm_{++}^{d(T+1)} :\ 
        \mathbf{W} = \operatorname{diag}(X_0, W_0, \ldots, W_{T{-}1}),\nonumber 
        \\
        &\quad\quad X_0 \in \mathbb{S}_{++}^d,  W_t \in \Symm_{++}^d,\
        G_\epsilon(\hat{X}_0, X_0) \leq \rho_{x_0},\nonumber \\
        &\quad\quad G_\epsilon(\hat{W}_t, W_t) \leq \rho_{w_t} \ \forall t \in [T] \big\},\label{eq:GW} \\
        \mathcal{G}_{V} &= \big\{ \mathbf{V} \in \Symm_{++}^{pT} :\ 
        \mathbf{V} = \operatorname{diag}(V_0, \ldots, V_{T{-}1}),\
        V_t \in \mathbb{S}_{++}^p\nonumber, \\
        &\quad\quad G_\epsilon(\hat{V}_t, V_t) \leq \rho_{v_t} \ \forall t \in [T] \big\}.\label{eq:GV}
    \end{align}
\end{proposition}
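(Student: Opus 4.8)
The plan is to reduce \eqref{eq:lower_bound} to \eqref{eq:trace_lowerbound} in two essentially independent steps — one collapsing the outer maximization onto a finite-dimensional set of covariance matrices, the other collapsing the inner (infinite-dimensional) minimization onto the class of affine causal policies — and then to finish with a direct Gaussian moment computation.

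For the outer maximization, note that $\mathcal{S}$ is by construction a product of ambiguity sets over the individual exogenous vectors, each consisting of zero-mean distributions on the appropriate Euclidean space. Hence any $\Prob \in \mathcal{S}_{\mathcal{N}}$ is a product of \emph{independent} zero-mean Gaussians, so it is parametrized by the covariances $X_0, W_0, \dots, W_{T-1}, V_0, \dots, V_{T-1}$ of its marginals; correspondingly $\mathbf{w}$ and $\mathbf{v}$ are independent Gaussians with block-diagonal covariances $\mathbf{W} = \operatorname{diag}(X_0, W_0, \dots, W_{T-1})$ and $\mathbf{V} = \operatorname{diag}(V_0, \dots, V_{T-1})$. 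By \cref{proposition:tightness}, for zero-mean Gaussians the Sinkhorn discrepancy equals the entropy-regularized Gelbrich divergence, so the constraint $W_\epsilon(\hat{\Prob}_{x_0}, \Prob_{x_0}) \le \rho_{x_0}$ becomes exactly $G_\epsilon(\hat{X}_0, X_0) \le \rho_{x_0}$, and analogously for each $w_t$ and $v_t$. Ranging over $\mathcal{S}_{\mathcal{N}}$ is therefore the same as ranging over $(\mathbf{W}, \mathbf{V}) \in \mathcal{G}_W \times \mathcal{G}_V$ with $\mathcal{G}_W, \mathcal{G}_V$ as in \eqref{eq:GW}--\eqref{eq:GV}.

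For the inner minimization — the key step — I would show that for each \emph{fixed} Gaussian $\Prob$ (equivalently, fixed $(\mathbf{W}, \mathbf{V})$) the infimum of $\E_\Prob[\mathbf{u}^\top \mathbf{R}\mathbf{u} + \mathbf{x}^\top \mathbf{Q}\mathbf{x}]$ over all causal measurable $\mathbf{u} \in \mathcal{U}_\eta$ is attained by an affine policy $\mathbf{u} = \mathbf{U}\bm\eta + \mathbf{q}$ with $\mathbf{U} \in \mathcal{U}^{\text{lin}}$ and $\mathbf{q} \in \R^{mT}$. Substituting $\mathbf{x} = \mathbf{H}\mathbf{u} + \mathbf{G}\mathbf{w}$, the cost is a convex quadratic functional of $\mathbf{u}$, namely $\E_\Prob[\mathbf{u}^\top \mathbf{M}\mathbf{u} + 2\mathbf{u}^\top \mathbf{H}^\top \mathbf{Q}\mathbf{G}\mathbf{w}]$ plus a $\mathbf{u}$-independent term, with $\mathbf{M} = \mathbf{R} + \mathbf{H}^\top \mathbf{Q}\mathbf{H} \succeq 0$. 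Since $(\mathbf{w}, \bm\eta)$ is jointly Gaussian, $\E_\Prob[\mathbf{w} \mid \bm\eta_{0:t}]$ is affine in $\bm\eta_{0:t}$, and a standard backward dynamic-programming recursion over the purified output — exactly as in classical LQG theory, cf.\ \cite{hadjiyiannis2011efficient, ben2009robust, taskesen2023distributionally} — keeps every cost-to-go a convex quadratic and every optimal $u_t$ affine in $\bm\eta_{0:t}$, i.e.\ yields a block lower-triangular $\mathbf{U}$. Because $\mathcal{U}^{\text{lin}} \subseteq \mathcal{U}_\eta$, restricting the inner minimization to affine causal policies leaves its value unchanged.

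It then remains to evaluate the cost at an affine policy. Writing $\bm\eta = \mathbf{D}\mathbf{w} + \mathbf{v}$, we get $\mathbf{u} = \mathbf{U}\mathbf{D}\mathbf{w} + \mathbf{U}\mathbf{v} + \mathbf{q}$ and $\mathbf{x} = (\mathbf{H}\mathbf{U}\mathbf{D} + \mathbf{G})\mathbf{w} + \mathbf{H}\mathbf{U}\mathbf{v} + \mathbf{H}\mathbf{q}$; using $\E[\mathbf{w}] = \E[\mathbf{v}] = 0$, $\E[\mathbf{w}\mathbf{w}^\top] = \mathbf{W}$, $\E[\mathbf{v}\mathbf{v}^\top] = \mathbf{V}$ and $\E[\mathbf{w}\mathbf{v}^\top] = 0$, all cross terms between $\mathbf{q}$ and the noise vanish and a short computation gives $\E_\Prob[\mathbf{u}^\top \mathbf{R}\mathbf{u} + \mathbf{x}^\top \mathbf{Q}\mathbf{x}] = \Tr(\mathbf{F}_1 \mathbf{W} + \mathbf{F}_2 \mathbf{V}) + \mathbf{q}^\top \mathbf{M}\mathbf{q}$ with $\mathbf{F}_1, \mathbf{F}_2$ exactly as defined before the statement. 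Chaining the three steps turns \eqref{eq:lower_bound} into \eqref{eq:trace_lowerbound}. I expect the main obstacle to be the second step: making the ``affine policies are optimal'' claim rigorous rather than folklore requires setting up the dynamic-programming recursion carefully so that, stage by stage, the cost-to-go remains a convex quadratic and the relevant conditional means remain affine in the purified observations available at that stage — and checking that the resulting policy is genuinely causal, i.e.\ lies in $\mathcal{U}^{\text{lin}}$.
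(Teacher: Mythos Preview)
Your proposal is correct and follows essentially the same approach as the paper: both reduce the inner minimization to affine policies by invoking the classical LQG result (the paper simply cites \cite{bertsekas2012dynamic} for this, whereas you sketch the dynamic-programming argument), both use \cref{proposition:tightness} to replace the Sinkhorn constraints on $\mathcal{S}_{\mathcal{N}}$ by entropy-regularized Gelbrich constraints on the covariances, and both finish with the same trace computation (the paper delegates this last step to \cite[Proposition~3.2]{taskesen2023distributionally}). The only cosmetic difference is the order in which you present the two reductions.
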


\begin{proof}
We first observe that, for any fixed $\Prob\in\mathcal{S_N}$, the inner minimization in \eqref{eq:lower_bound} constitutes a standard LQG problem, for which linear policies are globally optimal \cite{bertsekas2012dynamic}. Hence, as discussed in \cref{sec:purified-output}, we restrict the inner minimization in \eqref{eq:lower_bound} to policies of the form $\bf u = U\bm\eta + q$, where $\mathbf{q} \in \R^{mT}$ and $\bf U \in \mathcal{U}^\text{lin}$, without loss of generality.

Then, we note that, by \cref{proposition:tightness}, the set $\mathcal{S_N}$ is equivalent to the entropy-regularized Gelbrich set
\begin{equation}
\label{eq:Gelbrich}
\mathcal{G} = \mathcal{G}_{x_{0}} \otimes \Big(\otimes_{t=0}^{T-1} \mathcal{G}_{w_{t}}\Big) \otimes \Big(\otimes_{t=0}^{T-1} \mathcal{G}_{v_{t}}\Big)\,,
\end{equation}
where each component $\mathcal{G}_{x_{0}}$, $\mathcal{G}_{w_{t}}$, and $\mathcal{G}_{v_{t}}$ is defined by
    \begin{align*}
        \mathcal{G}_{x_0}\! &=\! \big\{ \mathbb{P}_{x_0}\! \in\! \mathcal{P}_0(\mathbb{R}^d) \!: \!
        \mathbb{E}_{\mathbb{P}}[x_0 x_0^\top] = X_0,G_\epsilon(\hat{X}_0, X_0)\! \leq \!\rho_{x_0} \big\}, \nonumber\\
        \mathcal{G}_{w_t}\! &=\!\big\{ \mathbb{P}_{w_t}\! \in\! \mathcal{P}_0(\mathbb{R}^d) \!:\!\mathbb{E}_{\mathbb{P}}[w_t w_t^\top] = W_t, G_\epsilon(\hat{W}_t, W_t)\! \leq\! \rho_{w_t} \big\}, \nonumber\\
        \mathcal{G}_{v_t}\! &=\!\big\{ \mathbb{P}_{v_t}\! \in\! \mathcal{P}_0(\mathbb{R}^p) \!:\!\mathbb{E}_{\mathbb{P}}[v_t v_t^\top] = V_t, G_\epsilon(\hat{V}_t, V_t)\! \leq\! \rho_{v_t} \big\}.
    \end{align*}
    Combining these observations, we equivalently rewrite \eqref{eq:lower_bound} as
\begin{align*}
&\max_{\mathbb{P} \in \mathcal{G}}\ \min_{\bf U, q}\ \mathbb{E}_{\mathbb{P}}\left[\bf u^{\top} R u + x^{\top} Q x \right]
\\
&\text { s.t. }\,\, \mathbf{U} \in \mathcal{U}^\text{lin},\ \bf {u = U(Dw + v) + q}, ~~\bf {x = H u + G w}.
\end{align*}
Following the same argument as \cite[Proposition 3.2]{taskesen2023distributionally}, the proof is concluded by rewriting the expectation of a quadratic form as a trace and by replacing the ambiguity set $\mathcal{G}$ by \eqref{eq:GW} and \eqref{eq:GV}.
\end{proof}

\subsection{Construction of an upper bound for \texorpdfstring{$p^\star$}{p*}}
To derive an upper bound for $p^\star$, we restrict our attention to linear policies in $\mathcal{U}_\eta$ and appropriately enlarge the ambiguity set $\mathcal{S}$. This construction relies on the following result, which provides a lower bound to the Sinkhorn discrepancy between two distributions when one of them is Gaussian.
\begin{proposition}
    Let $\Prob\sim \mathcal{N}(0, \Sigma_1)$, $\Sigma_1 \in \Symm_{++}^d$, and $\Q \in \mathcal{P}_0(\R^d)$ be a distribution with covariance $\Sigma_2 \in \Symm_{++}^d$. Then, it holds that $W_{\epsilon}(\Prob, \Q) \geq G_\epsilon(\Sigma_1, \Sigma_2)$.
    \label{proposition:Gelbrich}
\end{proposition}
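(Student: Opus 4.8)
The plan is to reduce the infinite-dimensional bound to a finite-dimensional one over couplings, and then to exploit a Gaussianization (covariance-projection) argument in the spirit of the classical Gelbrich inequality, now carried out for the entropy-regularized transport cost. First I would recall the variational definition \eqref{eq:sinkhorn} with the convention $\mu = \Prob$, $\nu \sim \mathcal{N}(0,\Sigma)$ adopted in the paper, so that $W_\epsilon(\Prob,\Q) = \inf_{\gamma \in \Gamma(\Prob,\Q)}\{\E_\gamma[\|x-y\|^2] + \epsilon\,\mathrm{KL}(\gamma\|\Prob\times\nu)\}$. Fix an arbitrary feasible coupling $\gamma \in \Gamma(\Prob,\Q)$ and let $\gamma_{\mathcal N}$ be the Gaussian distribution on $\R^d\times\R^d$ with the same mean (zero) and the same covariance as $\gamma$. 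The two ingredients I would assemble are: (i) the quadratic transport term $\E_\gamma[\|x-y\|^2] = \Tr(\operatorname{Cov}_\gamma[x-y])$ depends on $\gamma$ only through its second moments, hence is unchanged if we replace $\gamma$ by $\gamma_{\mathcal N}$; and (ii) the Gaussian has maximal differential entropy among all distributions with a given covariance, which should translate into $\mathrm{KL}(\gamma\|\Prob\times\nu) \geq \mathrm{KL}(\gamma_{\mathcal N}\|\Prob\times\nu)$ — here using crucially that the reference $\Prob\times\nu$ is itself Gaussian, so that the cross term $-\E_\gamma[\log \tfrac{d(\Prob\times\nu)}{dx}]$ is quadratic in the argument and again depends only on second moments, while the negative-entropy term $\E_\gamma[\log d\gamma]$ can only decrease under Gaussianization.

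Granting (i) and (ii), for every $\gamma \in \Gamma(\Prob,\Q)$ we get $\E_\gamma[\|x-y\|^2] + \epsilon\,\mathrm{KL}(\gamma\|\Prob\times\nu) \geq \E_{\gamma_{\mathcal N}}[\|x-y\|^2] + \epsilon\,\mathrm{KL}(\gamma_{\mathcal N}\|\Prob\times\nu)$. Now $\gamma_{\mathcal N}$ is a Gaussian coupling whose marginals are $\Prob = \mathcal N(0,\Sigma_1)$ and $\mathcal N(0,\Sigma_2)$ (the second marginal has the same covariance as $\Q$ by construction, and its mean is zero since $\Q \in \mathcal P_0$). Therefore the right-hand side is at least the infimum of the same objective over all \emph{Gaussian} couplings of $\mathcal N(0,\Sigma_1)$ and $\mathcal N(0,\Sigma_2)$, which by \cref{proposition:tightness} equals $G_\epsilon(\Sigma_1,\Sigma_2)$ (Proposition~\ref{proposition:tightness} identifies the optimal Sinkhorn coupling between two Gaussians as Gaussian and its value as $G_\epsilon$, and restricting \eqref{eq:sinkhorn} to Gaussian couplings cannot give a value below the unrestricted infimum, which is $G_\epsilon$ as well). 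Taking the infimum over $\gamma$ on the left yields $W_\epsilon(\Prob,\Q) \geq G_\epsilon(\Sigma_1,\Sigma_2)$.

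The main obstacle I anticipate is step (ii): making rigorous that $\mathrm{KL}(\gamma\|\Prob\times\nu)$ does not increase when $\gamma$ is replaced by its Gaussian surrogate, including the edge cases where $\gamma$ is not absolutely continuous (so $\mathrm{KL} = +\infty$ and the inequality is trivial) or where $\operatorname{Cov}_\gamma[x-y]$ is singular. The clean way is to write, whenever $\gamma \ll \text{Leb}$,
\begin{equation*}
\mathrm{KL}(\gamma\|\Prob\times\nu) = -h(\gamma) - \E_\gamma\!\left[\log \tfrac{d(\Prob\times\nu)}{d(x,y)}\right],
\end{equation*}
where $h(\gamma)$ is the differential entropy; the second term is an expectation of a quadratic-plus-constant function (because $\Prob\times\nu$ is Gaussian) and is therefore identical for $\gamma$ and $\gamma_{\mathcal N}$, while $-h(\gamma) \geq -h(\gamma_{\mathcal N})$ by the Gaussian maximum-entropy inequality; degenerate covariance and non-absolutely-continuous cases are handled by noting $W_\epsilon \geq 0 \geq G_\epsilon$ fails in general, so instead one argues that $\Sigma_1 \succ 0$ forces any finite-KL coupling to have a nondegenerate law and reduces to the regular case, and that the infinite-KL case makes the claimed inequality vacuous. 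A secondary point to be careful about is that Proposition~\ref{proposition:tightness} is stated for the KL-to-$\mu\times\nu$ regularization with $\mu=\Prob$, which is exactly the convention in force here, so no re-derivation of the Gaussian-to-Gaussian value is needed — one only invokes it as a black box.
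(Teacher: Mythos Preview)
Your argument is correct, but it follows a genuinely different route from the paper's proof. The paper does \emph{not} Gaussianize an arbitrary coupling; instead it brings in the specific optimal Gaussian coupling $\gamma_0$ from \cref{proposition:tightness} and writes, for any $\gamma\in\Gamma(\Prob,\Q)$,
\[
\E_\gamma[\|x-y\|^2]+\epsilon\,\mathrm{KL}(\gamma\|\Prob\times\nu)
=\epsilon\,\mathrm{KL}(\gamma\|\gamma_0)+\epsilon\int\log\!\Big(\tfrac{d\gamma_0\,e^{\|x-y\|^2/\epsilon}}{d\Prob\,d\nu}\Big)\,d\gamma.
\]
Because $\gamma_0$ has the Schr\"odinger-bridge form $d\gamma_0=e^{\phi(x)+\psi(y)-\|x-y\|^2/\epsilon}\,d\Prob\,d\nu$, the logarithm on the right is \emph{separable}, $\phi(x)+\psi(y)$, so its $\gamma$-expectation depends only on the marginals $\Prob$ and $\Q$ (and, $\phi,\psi$ being quadratic, only on $\Sigma_1,\Sigma_2$); an explicit calculation shows it equals $G_\epsilon(\Sigma_1,\Sigma_2)$, and nonnegativity of $\mathrm{KL}(\gamma\|\gamma_0)$ concludes. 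Your approach instead replaces $\gamma$ by its Gaussian moment-match $\gamma_{\mathcal N}$, uses the Gaussian maximum-entropy inequality (together with the fact that the reference $\Prob\times\nu$ is Gaussian, so the cross-entropy is quadratic) to show the objective does not increase, and then bounds from below by $W_\epsilon(\mathcal N(0,\Sigma_1),\mathcal N(0,\Sigma_2))=G_\epsilon(\Sigma_1,\Sigma_2)$ via \cref{proposition:tightness}. Your route is more elementary and parallels the classical Gelbrich argument, needing only max-entropy and \cref{proposition:tightness} as a black box; the paper's route is more explicit in that it identifies the exact slack as $\epsilon\,\mathrm{KL}(\gamma\|\gamma_0)$, which immediately characterizes the equality case. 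Your treatment of the edge cases is sound in spirit: whenever $\gamma$ is not absolutely continuous with respect to Lebesgue measure one has $\mathrm{KL}(\gamma\|\Prob\times\nu)=+\infty$ (since $\Prob\times\nu$ has strictly positive Lebesgue density), so those cases are vacuous.
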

\begin{proof}
Recall the definition of the joint probability distribution $\gamma_0$ given in \cref{proposition:tightness}. For any  $\gamma\in\Gamma(\Prob, \Q)$, the objective in \eqref{eq:sinkhorn} can be rewritten as
\begin{align*}
    &\epsilon\int_{\R^d\times\R^d}\log\left(\frac{\dd\gamma(x,y)e^{\frac{\|x-y\|^2}{\epsilon}}}{\dd\Prob(x)\dd\nu(y)}\right)\dd\gamma(x,y)
    \\
    &=\epsilon \mathrm{KL}(\gamma\|\gamma_0) + \epsilon\int_{\R^d\times\R^d}\log\left(\frac{\dd\gamma_0(x,y)e^{\frac{\|x-y\|^2}{\epsilon}}}{\dd\Prob(x)\dd\nu(y)}\right)\dd\gamma(x,y)
    \\
    &=\epsilon \mathrm{KL}(\gamma\|\gamma_0) + \epsilon\int_{\R^d\times\R^d}\log\left(\dd\gamma_0(x,y)e^{\frac{\|x-y\|^2}{\epsilon}}\right)\dd\gamma(x,y) 
    \\
    & \quad 
    - \epsilon\int_{\R^d}\log\left(\dd\Prob(x)\right)\dd\Prob(x) 
    - \epsilon\int_{\R^d}\log\left(\dd\nu(y)\right)\dd\Q(y)
    \\
    &=\underbrace{\Tr(\Sigma_1 + \Sigma_2)\! -\! 2\Tr(\Sigma_1X_\epsilon)
    \!-\! \tfrac{\epsilon}{2}\log\!\left((2\pi e)^{2d}\!\left(\tfrac{\epsilon}{2}\right)^d\!|\Sigma_1X_\epsilon|\right)}_{(\spadesuit)}
    \\
    &
    +\! \underbrace{\tfrac{\epsilon}{2}\log\!\left((2\pi e)^d|\Sigma_1|\right)}_{(\clubsuit)}
    \!+\! \underbrace{\tfrac{\epsilon}{2}\left(\Tr(\Sigma^{-1}\Sigma_2)\! -\! d
    \!+\! \log\!\left((2\pi e)^d|\Sigma|\right)\right)}_{(\diamondsuit)}
    \\
    &
    + \epsilon \mathrm{KL}(\gamma\|\gamma_0),
\end{align*}
where $(\spadesuit)$ is obtained computing the integral with the explicit expression of the density of $\gamma_0$ as per \cref{proposition:tightness}, $(\clubsuit)$ is the differential
entropy of the Gaussian $\Prob$, and $(\diamondsuit)$ results by computing the cross-entropy between $\nu$ and $\Q$. After some algebraic manipulations, we obtain $\Tr(\Sigma_1X_\epsilon) = \Tr(D_\epsilon) - \frac{\epsilon d}{4}$. Moreover, we also have that \[-\log\left(\left(\tfrac{\epsilon}{2}\right)^d|X_\epsilon|\right) = \log\left(\left(\tfrac{2}{\epsilon}\right)^d|X_\epsilon^{-1}\Sigma_2|\right) - \log|\Sigma_2|,\] and, using the relationships in \cite[Proposition 2.1]{del2020statistical}, that
\begin{align*}
  \log|X_\epsilon^{-1}\Sigma_2| &= \log|X_\epsilon\Sigma_1 + \tfrac{\epsilon}{2}I| 
  \\
  &=\log|\Sigma_1^{-\frac{1}{2}}D_\epsilon\Sigma_1^{\frac{1}{2}} + \tfrac{\epsilon}{4}I| = \log|D_\epsilon+ \tfrac{\epsilon}{4}I|.
\end{align*}
Therefore, we conclude that
\begin{align*}
&(\spadesuit) + (\clubsuit) + (\diamondsuit) =\ 
\Tr(\Sigma_1) + \Tr(\Sigma_2) - 2\Tr(D_\epsilon)+
\\
&\frac{\epsilon}{2}\Tr(\Sigma^{-1}\Sigma_2) + \frac{\epsilon}{2}\log\frac{|\Sigma|}{|\Sigma_2|} + \frac{\epsilon}{2}\log\!\left(\left(\frac{2}{\epsilon}\right)^d
\left|D_\epsilon + \frac{\epsilon}{4}I\right|\right).
\end{align*}
The claim follows from the nonnegativity of $\mathrm{KL}(\gamma\|\gamma_0)$ and the arbitrariness of $\gamma$.
\end{proof}

\cref{proposition:Gelbrich} shows that the Sinkhorn discrepancy can be lower bounded by discarding all distributional information except for the covariances. Hence, a valid outer approximation for the set $\mathcal{S}$ is given by the entropy-regularized Gelbrich set \eqref{eq:Gelbrich}, as formalized in the following Corollary.
\begin{corollary}
    For any regularization parameter $\epsilon\geq0$ and radius $\rho \geq 0$, it holds that the Sinkhorn ambiguity set $\{\Q \in \mathcal{P}_0(\mathbb{R}^d): W_\epsilon(\Prob, \Q) \leq \rho\}$ is always contained in the entropy-regularized Gelbrich set $\{\Q \in \mathcal{P}_0(\mathbb{R}^d): \E_\Q[zz^\top] = \Sigma_2\in\Symm^d_{++},\  G_\epsilon(\Sigma_1, \Sigma_2) \leq \rho\}$.
\end{corollary}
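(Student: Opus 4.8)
The plan is to read the corollary off directly from \cref{proposition:Gelbrich}. Fix an arbitrary $\Q$ in the Sinkhorn ambiguity set, so that $W_\epsilon(\Prob, \Q) \leq \rho$, and (recalling $\Prob \sim \mathcal{N}(0, \Sigma_1)$) let $\Sigma_2 = \E_\Q[zz^\top]$ be its covariance. \cref{proposition:Gelbrich} then yields $G_\epsilon(\Sigma_1, \Sigma_2) \leq W_\epsilon(\Prob, \Q) \leq \rho$, which is exactly the defining inequality of the entropy-regularized Gelbrich set, so $\Q$ belongs to it. Since $\Q$ was arbitrary, the claimed inclusion follows; it holds vacuously when the Sinkhorn ball is empty, so no nonemptiness assumption (such as $\rho \geq \underline{\rho}$) is needed.

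The one point that requires care — and the main, if modest, obstacle — is that \cref{proposition:Gelbrich} presupposes a well-defined, positive-definite covariance $\Sigma_2 \in \Symm_+^d$ for $\Q$, whereas a generic element of $\mathcal{P}_0(\R^d)$ need not have one. I would dispatch this as a preliminary step, distinguishing two cases. For $\epsilon > 0$: finiteness of $W_\epsilon(\Prob, \Q)$ requires at least one coupling $\gamma \in \Gamma(\Prob, \Q)$ with $\mathrm{KL}(\gamma \| \Prob \times \nu) < \infty$ (the transport term in \eqref{eq:sinkhorn} being nonnegative), hence $\gamma \ll \Prob \times \nu$; since $\Prob$ and $\nu$ are non-degenerate Gaussians, $\Prob \times \nu$ is equivalent to Lebesgue measure on $\R^d \times \R^d$, so $\gamma$, and therefore its marginal $\Q$, admits a Lebesgue density. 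Consequently $\Q$ charges no hyperplane, which precludes a singular covariance, and finiteness of the transport term $\E_\gamma[\|x-y\|^2]$ together with $\E_\Prob[\|x\|^2] < \infty$ gives $\E_\Q[\|z\|^2] < \infty$, so $\Sigma_2$ is well defined and lies in $\Symm_+^d$. For $\epsilon = 0$: the Sinkhorn discrepancy reduces to the squared Wasserstein distance and the statement is the classical Gelbrich inequality \cite{gelbrich1990formula} (reading $\Symm_+^d$ as the closed positive semidefinite cone in that degenerate case, or restricting attention to $\epsilon > 0$).

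Beyond this bookkeeping there is nothing further to do: the substantive content is entirely carried by \cref{proposition:Gelbrich}, and the remaining work is just to confirm that its hypotheses are met for every $\Q$ in the Sinkhorn ball, uniformly in $\epsilon \geq 0$.
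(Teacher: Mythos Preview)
Your proposal is correct and follows exactly the same one-line argument as the paper: apply \cref{proposition:Gelbrich} to any $\Q$ in the Sinkhorn ball to conclude $G_\epsilon(\Sigma_1,\Sigma_2)\leq W_\epsilon(\Prob,\Q)\leq\rho$. Your additional check that every such $\Q$ actually has a finite, positive-definite covariance $\Sigma_2$ (so that \cref{proposition:Gelbrich} applies) is more careful than the paper, which tacitly assumes this.
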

\begin{proof}
By \cref{proposition:Gelbrich}, we have that $W_{\epsilon}(\Prob,\Q) \geq G_\epsilon(\Sigma_1,\Sigma_2)$. Hence, if $W_{\epsilon}(\Prob,\Q) \leq \rho$, then $G_\epsilon(\Sigma_1,\Sigma_2) \leq \rho$, which completes the proof.
\end{proof}
By restricting to linear policies, we finally construct an upper bound for \eqref{eq:primal} as
\begin{equation}
    \label{eq:upper_bound}
    \overline{p}^\star =
    \left\{
    \begin{aligned}
    &\min_{\bf U, q}\ \max_{\mathbb{P} \in \mathcal{G}}\ \mathbb{E}_{\mathbb{P}}\left[\bf u^{\top} R u + x^{\top} Q x \right]
    \\
    &\text { s.t. }\ \mathbf{U} \in \mathcal{U}^\text{lin},\ \bf {u = U(Dw + v) + q},
    \\
    &\quad\quad\,\, \bf {x = H u + G w}.
    \end{aligned}
    \right.
\end{equation} 
As we enlarged nature's subproblem feasible set, and at the same time shrank the possible control
laws that the designer can select, we have that $\overline{p}^\star \geq p^\star$. In the next proposition, we rewrite \eqref{eq:upper_bound} as a finite-dimensional optimization problem.
\begin{proposition}
    The upper bound \eqref{eq:upper_bound} to the primal problem \eqref{eq:DR-LQG} is equivalent to the finite-dimensional program
    \begin{equation}
    \label{eq:trace_upperbound}
    \overline{p}^\star = \min _{\substack{\mathbf{U}\in\mathcal{U}^{\text{lin}}\\ \mathbf{q}\in\R^{mT}}}\ \max _{\substack{\mathbf{W}\in\mathcal{G}_W\\ \mathbf{V}\in\mathcal{G}_V}}\ \Tr\left(\mathbf{F}_{1} \mathbf{W}+\mathbf{F}_{2} \mathbf{V}\right)+\mathbf{q}^{\top} \mathbf{M} \mathbf{q}
    \end{equation}
where $\mathbf{F}_1, \mathbf{F}_2, \mathcal{G}_W, \mathcal{G}_V$ are defined as in \cref{proposition:trace_rewriting}.
\end{proposition}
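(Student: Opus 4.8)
The plan is to retrace the argument used in the proof of \cref{proposition:trace_rewriting}, with the roles of the inner and outer optimizations interchanged. The key structural observation is that, once the control law is restricted to the linear form $\mathbf{u} = \mathbf{U}(\mathbf{Dw}+\mathbf{v})+\mathbf{q}$ and one substitutes $\mathbf{x} = \mathbf{Hu}+\mathbf{Gw}$, the integrand $\mathbf{u}^\top\mathbf{Ru}+\mathbf{x}^\top\mathbf{Qx}$ is a fixed quadratic form in the exogenous vector $\delta$. Hence, for any $\mathbb{P}\in\mathcal{G}$, its expectation depends on $\mathbb{P}$ only through the second-moment matrices $\mathbf{W}=\mathbb{E}_{\mathbb{P}}[\mathbf{w}\mathbf{w}^\top]$ and $\mathbf{V}=\mathbb{E}_{\mathbb{P}}[\mathbf{v}\mathbf{v}^\top]$, not through higher-order moments or the shape of $\mathbb{P}$.

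First, I would expand $\mathbb{E}_{\mathbb{P}}[\mathbf{u}^\top\mathbf{Ru}+\mathbf{x}^\top\mathbf{Qx}]$ after substituting the constraints. Since every distribution in $\mathcal{G}\subseteq\mathcal{P}_0$ has zero mean, all cross terms involving $\mathbf{q}$ vanish in expectation; since the entries of $\delta$ are mutually independent, the cross terms between $\mathbf{w}$ and $\mathbf{v}$ also vanish, and $\mathbf{W},\mathbf{V}$ are block-diagonal with blocks $(X_0,W_0,\dots,W_{T-1})$ and $(V_0,\dots,V_{T-1})$. Rewriting the expectation of the remaining quadratic terms as traces then yields exactly $\Tr(\mathbf{F}_1\mathbf{W}+\mathbf{F}_2\mathbf{V})+\mathbf{q}^\top\mathbf{Mq}$ with $\mathbf{F}_1,\mathbf{F}_2,\mathbf{M}$ as in \cref{proposition:trace_rewriting}; this is the same bookkeeping computation used there, and I would simply invoke it rather than redo it.

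Second, I would show that the inner maximization over $\mathbb{P}\in\mathcal{G}$ collapses to a maximization over $(\mathbf{W},\mathbf{V})\in\mathcal{G}_W\times\mathcal{G}_V$. For the ``$\le$'' direction, every $\mathbb{P}\in\mathcal{G}$ has, by the definition \eqref{eq:Gelbrich} of $\mathcal{G}$, second moments satisfying $G_\epsilon(\hat{X}_0,X_0)\le\rho_{x_0}$, $G_\epsilon(\hat{W}_t,W_t)\le\rho_{w_t}$, $G_\epsilon(\hat{V}_t,V_t)\le\rho_{v_t}$, i.e. $\mathbf{W}\in\mathcal{G}_W$ and $\mathbf{V}\in\mathcal{G}_V$. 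For the ``$\ge$'' direction, any admissible pair $(\mathbf{W},\mathbf{V})$ is realized by the product of zero-mean Gaussians with the corresponding block covariances, which lies in $\mathcal{G}$ precisely because the defining inequalities of $\mathcal{G}_W,\mathcal{G}_V$ are exactly the Gelbrich constraints entering \eqref{eq:Gelbrich}. As the objective depends on $\mathbb{P}$ only through $(\mathbf{W},\mathbf{V})$, the two suprema coincide; taking the outer minimum over $\mathbf{U}\in\mathcal{U}^{\text{lin}}$ and $\mathbf{q}\in\R^{mT}$ gives \eqref{eq:trace_upperbound}.

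I do not expect a genuine obstacle: the analytical content has already been spent in \cref{proposition:tightness,proposition:Gelbrich} and the subsequent corollary, and the present statement is just the resulting finite-dimensional reformulation, obtained via the ``expectation of a quadratic form equals a trace'' identity. The only points requiring care are (i) verifying that the reduction from distributions to covariance matrices is exact in both directions, as sketched above, and (ii) noting that no attainment argument is needed at this stage — only equality of optimal values is claimed, while compactness of $\mathcal{G}_W,\mathcal{G}_V$ (hence existence of maximizers) is handled separately when the sandwich argument is assembled.
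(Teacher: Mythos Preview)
Your proposal is correct and follows essentially the same route as the paper: substitute the linear policy and the dynamics, use zero mean and independence to reduce the expectation of the quadratic to $\Tr(\mathbf{F}_1\mathbf{W}+\mathbf{F}_2\mathbf{V})+\mathbf{q}^\top\mathbf{Mq}$, and then replace the ambiguity set $\mathcal{G}$ by the covariance sets $\mathcal{G}_W,\mathcal{G}_V$. The paper's own proof is in fact terser---it simply refers back to \cref{proposition:trace_rewriting} and \cite[Proposition~3.4]{taskesen2023distributionally}---whereas you spell out the two-sided equivalence between maximizing over $\mathbb{P}\in\mathcal{G}$ and over $(\mathbf{W},\mathbf{V})\in\mathcal{G}_W\times\mathcal{G}_V$, which is a welcome clarification but not a different argument.
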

\begin{proof}
The proof follows the same steps as the proof of \cite[Proposition 3.4]{taskesen2023distributionally}. In the same way as in \cref{proposition:trace_rewriting}, we first substitute $\bf {x = H u + G w}$ and $\bf {u = U(Dw + v) + q}$ in the objective of \eqref{eq:upper_bound}. Then, for any $\Prob\in\mathcal{G}$, we rewrite the previous expectation as a trace in terms of the covariance matrices $\mathbf{W} = \E_\Prob[\bf ww^\top]$ and $\mathbf{V} = \E_\Prob[\bf vv^\top]$, and conclude by replacing $\mathcal{G}$ with \eqref{eq:GW} and \eqref{eq:GV}.
\end{proof}

We conclude this section by observing that \eqref{eq:trace_lowerbound} and \eqref{eq:trace_upperbound} are dual to each other, in the sense that one can be obtained from the other by swapping the order of optimization.

\subsection{Existence of a Nash equilibrium}
In this section, we show that strong duality holds between \eqref{eq:trace_lowerbound} and \eqref{eq:trace_upperbound}, and hence that the Sinkhorn DR LQG admits a Nash equilibrium in the class of linear policies. Before proving these results, we present a technical lemma that characterizes structural and
topological properties of $\mathcal{G}_W$ and $\mathcal{G}_V$.
\begin{lemma}
    \label{lemma:compactness}
    Given $\hat{\Sigma}\in\Symm^d_{++}$, $\rho\geq 0$ and $\epsilon \geq 0$ and finite, the set $\mathcal{D} = \{M \in\Symm^d_{++} : G_\epsilon(\hat{\Sigma},M) \leq \rho\}$ is convex and compact.
\end{lemma}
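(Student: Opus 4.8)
The plan is to establish convexity and compactness separately, and for compactness to reduce to showing that $\mathcal{D}$ is closed and bounded in $\Symm^d$, which is finite-dimensional. The main tool throughout is the structure of $G_\epsilon(\hat\Sigma, M)$ as a function of $M$, viewed through the expression in \eqref{eq:entropy_reg_gelbrich} with $\Sigma_1 = \hat\Sigma$, $\Sigma_2 = M$, and $D_\epsilon = D_\epsilon(M) = (\hat\Sigma^{1/2} M \hat\Sigma^{1/2} + \tfrac{\epsilon^2}{16} I)^{1/2}$.

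For \textbf{convexity}, I would show that $M \mapsto G_\epsilon(\hat\Sigma, M)$ is convex on $\Symm^d_+$; the sublevel set $\mathcal{D}$ is then automatically convex. Term by term: $\Tr(\hat\Sigma)$ is constant; $\Tr(M)$ and $\tfrac{\epsilon}{2}\Tr(\Sigma^{-1}M)$ are linear in $M$; $\tfrac{\epsilon}{2}\log\tfrac{|\Sigma|}{|M|} = \tfrac{\epsilon}{2}\log|\Sigma| - \tfrac{\epsilon}{2}\log|M|$ is convex since $-\log|M|$ is convex on $\Symm^d_+$. The two remaining terms both involve $D_\epsilon(M)$, and the key fact is that the map $M \mapsto D_\epsilon(M) = (\hat\Sigma^{1/2} M \hat\Sigma^{1/2} + \tfrac{\epsilon^2}{16} I)^{1/2}$ is \emph{concave} in the Loewner order (operator concavity of the square root composed with an affine map). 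Consequently $-2\Tr(D_\epsilon)$ is convex (minus trace of a concave matrix function), and for the last term I would use that $\log|\cdot|$ is concave and monotone on $\Symm^d_+$, so $M \mapsto \log|D_\epsilon(M) + \tfrac{\epsilon}{4} I|$ is a concave-monotone function of a concave matrix map, hence concave, and its negative—which is how it effectively enters when we combine with the sign—requires care. Here is the subtlety: the last term $+\tfrac{\epsilon}{2}\log|D_\epsilon + \tfrac{\epsilon}{4}I|$ has a $+$ sign, so it is concave, not convex. I would resolve this by recognizing that $G_\epsilon$ coincides with the optimal value of \eqref{eq:sinkhorn} (by \cref{proposition:tightness}), and \eqref{eq:sinkhorn} is an infimum over couplings of a jointly linear-plus-KL functional, hence jointly convex in $(\Prob, \Q)$; restricting to Gaussians and parametrizing by covariance preserves convexity in $M$. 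This representation-based argument sidesteps the need to verify convexity of the explicit formula directly and is the cleanest route.

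For \textbf{compactness}, since $\Symm^d$ is finite-dimensional it suffices to show $\mathcal{D}$ is closed and bounded. Closedness: $G_\epsilon(\hat\Sigma, \cdot)$ is continuous on $\Symm^d_+$ (it is built from continuous matrix operations—square roots of positive definite matrices, traces, log-determinants), so $\mathcal{D}$ is the preimage of $(-\infty, \rho]$ intersected with $\Symm^d_+$; I would additionally need that the limit of any convergent sequence in $\mathcal{D}$ stays in $\Symm^d_+$, which follows because $-\tfrac{\epsilon}{2}\log|M| \to +\infty$ as $M$ approaches the boundary of $\Symm^d_+$ (a singular limit), forcing $G_\epsilon \to +\infty$ and contradicting the bound $\rho$—so the eigenvalues of $M$ stay bounded away from $0$ (when $\epsilon > 0$; for $\epsilon = 0$ one uses the $-2\Tr(D_0) = -2\Tr((\hat\Sigma^{1/2}M\hat\Sigma^{1/2})^{1/2})$ term together with the classical Gelbrich compactness argument). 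Boundedness: from \eqref{eq:entropy_reg_gelbrich}, all terms except $-2\Tr(D_\epsilon)$ and $-\tfrac{\epsilon}{2}\log|M|$ are $\Tr(M)$, $\tfrac{\epsilon}{2}\Tr(\Sigma^{-1}M)$ and constants; grouping, $G_\epsilon(\hat\Sigma,M) \geq \Tr(M) - 2\Tr(D_\epsilon(M)) + (\text{const})$, and since $D_\epsilon(M) \preceq \hat\Sigma^{1/2} M^{1/2}\hat\Sigma^{1/2}$-type bounds give $\Tr(D_\epsilon(M)) = O(\sqrt{\|M\|})$, the quadratic-growth term $\Tr(M) = \|M\|_{*}$ dominates, so $G_\epsilon(\hat\Sigma, M) \to +\infty$ as $\Tr(M) \to \infty$; hence the sublevel set $\{G_\epsilon \leq \rho\}$ is bounded. (One must also confirm $\mathcal{D} \neq \emptyset$, which holds precisely when $\rho \geq \underline\rho$ as noted in the preliminaries; if $\mathcal{D}$ is empty it is trivially convex and compact.)

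The \textbf{main obstacle} I anticipate is handling the $+\tfrac{\epsilon}{2}\log|D_\epsilon + \tfrac{\epsilon}{4}I|$ term in the convexity proof: its explicit sign is "wrong" for a naive term-by-term argument, so the robust fix is to invoke the variational representation of $G_\epsilon$ via \cref{proposition:tightness} and the joint convexity of \eqref{eq:sinkhorn} in its arguments (an infimum of affine-plus-KL is convex). For the boundedness estimate in compactness, the care point is getting a clean inequality $\Tr(D_\epsilon(M)) \le \Tr(\hat\Sigma^{1/2}M\hat\Sigma^{1/2})^{1/2} + \tfrac{\epsilon d}{4}$ or similar via subadditivity of the square root, so that linear growth of $\Tr(M)$ visibly beats the square-root growth of $\Tr(D_\epsilon(M))$. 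Both of these are standard once set up correctly, so I expect the write-up to be short.
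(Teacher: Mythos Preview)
Your compactness argument is essentially the paper's: continuity of each ingredient of $G_\epsilon(\hat\Sigma,\cdot)$ on $\Symm^d_+$ gives closedness, and boundedness follows because the linear term $\Tr(M)$ dominates the $O(\sqrt{\lambda_{\max}(M)})$ growth of $\Tr(D_\epsilon(M))$. You are in fact more careful than the paper about the boundary $\partial\Symm^d_+$, using $-\tfrac{\epsilon}{2}\log|M|\to+\infty$ to keep limits inside $\Symm^d_+$; the paper simply calls $\mathcal{D}$ a lower level set of a continuous function and moves on.

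For convexity your route differs from the paper's and, as written, has a gap. The paper does not use a variational representation: it groups the two ``bad'' terms together as
\[
(\heartsuit)=\Tr(\hat\Sigma+M-2D_\epsilon)+\tfrac{\epsilon}{2}\log\!\big((\tfrac{2}{\epsilon})^d|D_\epsilon+\tfrac{\epsilon}{4}I|\big)
\]
and invokes \cite[Proposition~6]{janati2020entropic} for convexity of $(\heartsuit)$ in $M$; the rest is linear plus $-\tfrac{\epsilon}{2}\log|M|$. Your proposed fix---``\eqref{eq:sinkhorn} is jointly convex in $(\Prob,\Q)$, and restricting to Gaussians and parametrizing by covariance preserves convexity in $M$''---fails at the last step: the map $M\mapsto\mathcal{N}(0,M)$ is not affine in the space of measures (a convex combination of Gaussians is a mixture, not a Gaussian with averaged covariance), so convexity of $\Q\mapsto W_\epsilon(\hat\Prob,\Q)$ on measures does \emph{not} descend to convexity in $M$. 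The repair is to bring in \cref{proposition:Gelbrich} as well as \cref{proposition:tightness}: together they give
\[
G_\epsilon(\hat\Sigma,M)=\inf\big\{W_\epsilon(\hat\Prob,\Q):\ \Q\in\mathcal{P}_0(\R^d),\ \E_\Q[zz^\top]=M\big\},
\]
and since the constraint $\E_\Q[zz^\top]=M$ is linear in $(\Q,M)$, partial minimization of the convex functional $W_\epsilon(\hat\Prob,\cdot)$ over this linear slice is convex in $M$. Equivalently, use \cref{proposition:tightness} to restrict to Gaussian couplings parametrized by the cross-covariance $C$, and observe that $G_\epsilon(\hat\Sigma,M)$ is then the infimum over $C$ of a jointly convex function of $(C,M)$ (linear cost plus Gaussian KL, which contributes $-\log\big|\!\left(\begin{smallmatrix}\hat\Sigma & C\\ C^\top & M\end{smallmatrix}\right)\!\big|$) subject to the LMI $\left(\begin{smallmatrix}\hat\Sigma & C\\ C^\top & M\end{smallmatrix}\right)\succeq 0$. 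Either route closes the gap, but neither is the step you wrote.
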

\begin{proof}
    \textit{Convexity}: the map $M\rightarrow G_\epsilon(\hat{\Sigma}, M)$ is convex because sum of convex functions. Indeed, 
     \[
    \scalebox{0.75}{$
    G_\epsilon(\hat{\Sigma}, M)\! = 
    \!\underbrace{\Tr(\hat{\Sigma}\! +\! M\!-\! 2D_\epsilon)\! +\! \tfrac{\epsilon}{2}\left(\log\!\left(\!\left(\tfrac{2}{\epsilon}\right)^d\!\left|D_\epsilon\! +\! \tfrac{\epsilon}{4}I\right|\!\right)\right.}_{(\heartsuit)}\!
    \left.\!+\Tr\!\left(\Sigma^{-1}M\right)\! +\! \log\tfrac{|\Sigma|}{|M|}\right)\!,$}
    \]
    and $(\heartsuit)$ is convex because of \cite[Proposition 6]{janati2020entropic}, while the trace is linear (hence convex) and the negative log-determinant is convex. This implies that $\mathcal{D}$ is convex because it is the level set of a convex function \cite[Proposition 2.7]{rockafellar2009variational}.
    \newline
    \textit{Compactness}: we want to show that $\mathcal{D}$ is closed and bounded. To this end, let the function $f:\Symm^d_{++}\rightarrow\R$ 
    be defined as \[
\scalebox{0.88}{$f(M) = \Tr(M - 2D_\epsilon) + \frac{\epsilon}{2} \left( \log|D_\epsilon + \frac{\epsilon}{4}I| + \Tr(\Sigma^{-1}M) - \log|M| \right)$.}
\]
    This function involves affine transformations of $M$ along with continuous
    transformations on $\Symm^d_{++}$ such as matrix square-root, trace and log-determinant. Hence, $f(\cdot)$ is continuous and the set $\mathcal{D} = \{M \in \Symm^d_{++} : f(M) \leq \tilde{\rho}\}$ with $\tilde{\rho} = \rho - \Tr(\hat{\Sigma}) + 
    \frac{\epsilon}{2}(\log(\frac{\epsilon}{2})^d - \log|\Sigma|) $ is closed because it is the lower level set of a continuous function \cite[Theorem 1.6]{rockafellar2009variational}.

    To show boundedness, we proceed by contradiction and assume that $\sup_{M\in\mathcal{D}}\ \lambda_{\max}(M) = +\infty$. We construct a lower bound for $f(M)$ by bounding each addend separately.
    The term $\Tr(M)$ can be lower bounded by $\lambda_{\max}(M)$. Since $\|\hat{\Sigma}^{\frac{1}{2}}M\hat{\Sigma}^{\frac{1}{2}}\|_2 \leq \|\hat{\Sigma}^{\frac{1}{2}}\|_2^2\|M\|_2 = 
    \lambda_{\max}(\hat{\Sigma})\lambda_{\max}(M)$ by the submultiplicativity property of the operator norm, $\Tr(D_\epsilon) = \sum_{i=1}^d \sqrt{\lambda_i(\hat{\Sigma}^\frac{1}{2}M\hat{\Sigma}^\frac{1}{2}) + \frac{\epsilon^2}{16}}
    \leq d\sqrt{\lambda_{\max}(\hat{\Sigma})\lambda_{\max}(M) + \frac{\epsilon^2}{16}}$, and we obtain a lower bound for the second addend. By definition of $D_\epsilon$ in \cref{def:entropy_reg_gelbrich}, since $\hat{\Sigma}^\frac{1}{2}M\hat{\Sigma}^\frac{1}{2} \succ 0$, we have that $\lambda_i(D_\epsilon) > \frac{\epsilon}{4}\, \forall i$. Therefore, $\left|D_\epsilon + \frac{\epsilon}{4}I\right| \geq \left(\frac{\epsilon}{2}\right)^d$ and we can lower bound the third addend by $\frac{\epsilon d}{2}\log\left(\frac{\epsilon}{2}\right)$. The fourth addend can be bounded as
    $\Tr(\Sigma^{-1}M) \geq \frac{\Tr(M)}{\lambda_{\max}(\Sigma)} \geq \frac{\lambda_{\max}(M)}{\lambda_{\max}(\Sigma)}$. Finally, $\log|M| = \sum_{i=1}^d \log(\lambda_i(M)) \leq d\log(\lambda_{\max}(M))$ and $\log\frac{|\Sigma|}{|M|}$ is lower-bounded by $-d\log|\Sigma|\log\lambda_{\max}(M)$. Putting
    everything together we get
    \begin{equation*}
    \begin{aligned}
        &f(M) \geq\ \lambda_{\max}(M) 
        - 2d \sqrt{ \lambda_{\max}(\hat{\Sigma}) \lambda_{\max}(M) + \frac{\epsilon^2}{16} } 
        \\
        &+ \frac{\epsilon d}{2} \log\left( \frac{\epsilon}{2} \right)  + \frac{\epsilon}{2} \frac{\lambda_{\max}(M)}{ \lambda_{\max}(\hat{\Sigma}) } 
        - \frac{\epsilon d}{2}\log|\Sigma| \log\left( \lambda_{\max}(M) \right).
    \end{aligned}
    \end{equation*}
    As $\lambda_{\max}(M)\rightarrow +\infty$, the linear terms in $\lambda_{\max}(M)$ dominate the other ones. Consequently, the RHS is unbounded
    when choosing $M$ such that $\lambda_{\max}(M)= +\infty$. Therefore, since  $\tilde{\rho}$ is finite when so is $\epsilon$, 
    we contradict the fact that any $M\in\mathcal{D}$ satisfies $f(M) \leq \tilde{\rho}$. Hence, $\mathcal{D}$ is bounded.
\end{proof}
\cref{lemma:compactness} is key, as it enables the use of Sion's minimax theorem \cite[Theorem 3.4]{Sion1958} to prove the existence of a Nash equilibrium for \eqref{eq:DR-LQG}. This is formally stated in the next theorem. 
\begin{theorem}
\label{thm:Nash}
The following results hold:
\begin{enumerate}[leftmargin=*]
    \item The optimal values $\overline{p}^\star$ of \eqref{eq:lower_bound} and $\underline{d}^\star$ of \eqref{eq:upper_bound} coincide;
    \item The optimal values $p^\star$ of \eqref{eq:primal} and $d^\star$ of \eqref{eq:dual} coincide;
    \item There exist $\mathbf{U}^\star\in\mathcal{U}^\text{lin}$ and $\mathbf{q}^\star\in\R^{mT}$ such that the DR LQG problem \eqref{eq:primal} is solved by $\bf u = q^\star + U ^\star y$;
    \item The dual DR problem \eqref{eq:dual} is solved by a Gaussian distribution $\Prob^\star\in\mathcal{S_N}$. 
\end{enumerate}
\end{theorem}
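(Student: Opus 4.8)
The strategy is to reduce all four claims to a single statement: the two finite-dimensional programs \eqref{eq:trace_lowerbound} and \eqref{eq:trace_upperbound} have the same optimal value and share a saddle point. Writing $\Phi(\mathbf{U},\mathbf{q},\mathbf{W},\mathbf{V}) = \Tr(\mathbf{F}_1\mathbf{W}+\mathbf{F}_2\mathbf{V}) + \mathbf{q}^\top\mathbf{M}\mathbf{q}$ (with $\mathbf{F}_1,\mathbf{F}_2$ depending on $\mathbf{U}$, and noting $\mathbf{F}_2=\mathbf{U}^\top\mathbf{M}\mathbf{U}$), \cref{proposition:trace_rewriting} and its counterpart for the upper bound say that $\underline{d}^\star$ is the max--min and $\overline{p}^\star$ the min--max of $\Phi$ over $(\mathbf{U},\mathbf{q})\in\mathcal{U}^{\text{lin}}\times\R^{mT}$ and $(\mathbf{W},\mathbf{V})\in\mathcal{G}_W\times\mathcal{G}_V$. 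Since $\underline{d}^\star\le d^\star\le p^\star\le\overline{p}^\star$ by construction, it suffices to interchange the order of optimization in $\Phi$ and to show the extrema are attained.

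For the interchange I would apply Sion's minimax theorem \cite[Theorem 3.4]{Sion1958}. Its hypotheses are met: $\mathcal{G}_W\times\mathcal{G}_V$ is a finite product of sets of the form $\{M : G_\epsilon(\hat\Sigma,M)\le\rho\}$, each convex and compact by \cref{lemma:compactness}, hence itself convex and compact; $\mathcal{U}^{\text{lin}}\times\R^{mT}$ is convex, being a linear subspace times a Euclidean space; for fixed $(\mathbf{U},\mathbf{q})$ the map $(\mathbf{W},\mathbf{V})\mapsto\Phi$ is affine, hence concave and continuous; and for fixed $(\mathbf{W},\mathbf{V})\succeq0$ the map $(\mathbf{U},\mathbf{q})\mapsto\Phi$ is a nonnegative convex quadratic, since $\mathbf{F}_2=\mathbf{U}^\top\mathbf{M}\mathbf{U}$ and each trace term rewrites as the squared Frobenius norm of a linear image of $\mathbf{U}$ (using $\mathbf{Q}\succeq0$ and $\mathbf{R}\succ0$, hence $\mathbf{M}\succ0$), while $\mathbf{q}^\top\mathbf{M}\mathbf{q}$ is a convex quadratic. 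Sion's theorem then yields $\overline{p}^\star=\underline{d}^\star$, which is claim (1); combined with the sandwich $\underline{d}^\star\le d^\star\le p^\star\le\overline{p}^\star$ this forces $d^\star=p^\star$, claim (2).

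Next I would secure attainment. The outer maximum in \eqref{eq:trace_lowerbound} is the supremum over the compact set $\mathcal{G}_W\times\mathcal{G}_V$ of $g(\mathbf{W},\mathbf{V})=\min_{\mathbf{U},\mathbf{q}}\Phi$, an infimum of functions affine (hence continuous) in $(\mathbf{W},\mathbf{V})$ and therefore upper semicontinuous; so the maximum is attained at some $(\mathbf{W}^\star,\mathbf{V}^\star)$. When $\epsilon>0$, the $-\log|M|$ term inside $G_\epsilon$ keeps every admissible block bounded away from singularity, so $\mathbf{W}^\star,\mathbf{V}^\star\succ0$; then $\Phi(\cdot,\cdot,\mathbf{W}^\star,\mathbf{V}^\star)$ is a coercive convex quadratic on $\mathcal{U}^{\text{lin}}\times\R^{mT}$ (as $\mathbf{M}\succ0$ and $\mathbf{V}^\star\succ0$), so its minimum is attained at some $(\mathbf{U}^\star,\mathbf{q}^\star)$; the limiting case $\epsilon=0$ reduces to the Gelbrich--Wasserstein setting of \cite{taskesen2023distributionally}. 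By the equality of min--max and max--min, the quadruple $(\mathbf{U}^\star,\mathbf{q}^\star,\mathbf{W}^\star,\mathbf{V}^\star)$ is a saddle point of $\Phi$.

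Finally I would transfer this back to \eqref{eq:primal}--\eqref{eq:dual}. The policy $\mathbf{u}^\star=\mathbf{U}^\star\bm\eta+\mathbf{q}^\star$ attains $\overline{p}^\star=p^\star$; via the affine correspondence between causal functions of $\bm\eta$ and of $\mathbf{y}$ recalled in \cref{sec:purified-output}, it is an affine causal output-feedback law $\mathbf{u}=\mathbf{q}^\star+\mathbf{U}^\star\mathbf{y}$ after relabelling, which is claim (3). For claim (4), let $\Prob^\star$ be the product of zero-mean Gaussians with the covariance blocks of $\mathbf{W}^\star$ and $\mathbf{V}^\star$; by \cref{proposition:tightness} the Sinkhorn discrepancy of each component from its nominal equals $G_\epsilon(\hat\Sigma,\Sigma^\star)\le\rho$, so $\Prob^\star\in\mathcal{S_N}\subseteq\mathcal{S}$, and the inner LQG value at $\Prob^\star$ equals $\min_{\mathbf{U},\mathbf{q}}\Phi(\mathbf{U},\mathbf{q},\mathbf{W}^\star,\mathbf{V}^\star)=\underline{d}^\star=d^\star$, so $\Prob^\star$ maximizes \eqref{eq:dual}. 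The main obstacle is verifying the Sion hypotheses together with coercivity of the inner minimization --- exactly where the sign conditions on $\mathbf{Q},\mathbf{R}$ and (for $\epsilon>0$) the log-determinant barrier in $G_\epsilon$ enter --- since compactness itself is already delivered by \cref{lemma:compactness}.
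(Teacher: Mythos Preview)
Your proposal is correct and follows essentially the same route as the paper: verify the hypotheses of Sion's minimax theorem on the finite-dimensional reformulations \eqref{eq:trace_lowerbound}--\eqref{eq:trace_upperbound} using \cref{lemma:compactness}, then collapse the sandwich $\underline{d}^\star\le d^\star\le p^\star\le\overline{p}^\star$. You go further than the paper in explicitly arguing \emph{attainment} of the saddle point (upper semicontinuity of $g$, coercivity via $\mathbf{M}\succ0$ and the $-\log|M|$ barrier forcing $\mathbf{W}^\star,\mathbf{V}^\star\succ0$), which the paper's proof leaves implicit; this extra care is welcome, though note that the linear map taking a causal $\bm\eta$-feedback to a causal $\mathbf{y}$-feedback changes the gain matrix, so ``after relabelling'' in your claim~(3) should not be read as the \emph{same} $\mathbf{U}^\star$.
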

\begin{proof}
By Lemma \ref{lemma:compactness} the sets $\mathcal{G}_W$ and $\mathcal{G}_V$ are compact and convex. The trace is linear, hence the objective function is
concave in $\mathbf{W}$ and $\mathbf{V}$. Moreover, since $\mathbf{Q}, \mathbf{R}, \mathbf{M} \succeq 0$, it is convex in $\mathbf{U}$ and $\mathbf{q}$. Therefore, 
we can apply Sion's minimax theorem \cite[Theorem 3.4]{Sion1958} to show that strong duality holds, proving the first point of the theorem. 
By strong duality, the chain of inequalities $\underline{d}^\star \leq d^\star \leq p^\star \leq \overline{p}^\star$ collapses to equalities proving the second point.
The equality 
$p^\star = \overline{p}^\star$ implies that \eqref{eq:primal} is solved by a linear causal policy of the purified observations. However, as pointed out in \cref{sec:purified-output}, any causal controller that is linear in the purified outputs $\bm\eta$ can be also expressed as a causal linear feedback in the measurements $\bf y$. This proves the third point. We finish the proof noticing that the last point of the theorem follows from the identity $\underline{d}^\star = d^\star$. 
\end{proof}
\Cref{thm:Nash} implies that \eqref{eq:DR-LQG} is solved by an LQG with worst-case covariance matrices given by \eqref{eq:trace_lowerbound}. We report a reformulation of \eqref{eq:trace_lowerbound} as finite-dimensional conic program in \cref{prop:SDP_reformulation} in the Appendix.
We conclude this section by observing that, when $\epsilon\rightarrow\infty$, the set $\mathcal{S}$ either becomes empty or reduces to the singleton $\nu$ depending on whether $\Tr(\hat{\Sigma}) + \Tr(\Sigma)$ exceeds $\rho$ or not, see \cite[Proposition 1]{cescon2025data}. In particular, when $\Tr(\hat{\Sigma}) + \Tr(\Sigma) \leq \rho$, we still retain global optimality of linear policies as $\mathcal{S} = \{\nu\}$, and $\nu$ is Gaussian. Instead, when $\Tr(\hat{\Sigma}) + \Tr(\Sigma) > \rho$, the Sinkhorn DR LQG problem \eqref{eq:DR-LQG} becomes infeasible.
\section{Numerical Experiments}
In this section, we present numerical simulations to showcase the advantages of robustifying against distributional uncertainty.\footnote{All our experiments were run on an M3 Pro CPU machine with 36GB RAM. All SDP problems were modeled in Matlab 2023a using Yalmip and solved with MOSEK.
Our source code is publicly available at \href{https://github.com/DecodEPFL/Optimality_Sinkhorn}{\texttt{https://github.com/DecodEPFL/Optimality\_Sinkhorn.git}}.} For our experiments, we consider the open-loop unstable discrete-time linear dynamical system given by
\begin{align*}
    x_{t+1} = \begin{bmatrix}
        1.1 & 0.1 \\
        0 & 1.1
    \end{bmatrix}x_t + 
    \begin{bmatrix}
        1\\
        1
    \end{bmatrix}
    u_t + w_t, \quad
    y_t = 
    x_t + v_t\,,
\end{align*}
with cost matrices $Q_t = I_2$ and $R_t = 1$ at all times, control horizon $T =25$, and nominal covariances $\hat{X}_0 = I_2, \hat{W}_t = I_2, \hat{V}_t = 0.01I_2, \ \forall t\in[T]$. The reference covariance was picked $\Sigma = I_2$.

In \cref{fig:plots}, we benchmark the classical LQG controller designed based on the nominal covariances against the Sinkhorn DR LQG policy obtained by solving \eqref{eq:DR-LQG} using the conic program in \cref{prop:SDP_reformulation} with radii $\rho_{x_0} = 20,\rho_{w_t} = 0.2, \rho_{v_t} = 20$ and regularization parameter $\epsilon = 0.01$. Specifically, we compare the performance of these controllers on $5000$ realizations of the exogenous disturbances drawn from the nominal Gaussian distribution (on the left) and from the respective nature's adversarial choice of distribution $\Prob$ within $\mathcal{S}$ (on the right). As expected, we observe that the Sinkhorn DR LQG policy incurs a slightly higher average cost when the true distribution corresponds to the nominal one. Conversely, when the noise distributions are selected adversarially within $\mathcal{S}$, we observe that the proposed Sinkhorn DR LQG policy achieves a lower average cost. These results validate our design and highlight fundamental tradeoff between nominal performance and distributional robustness.



\begin{figure*}[htb]
    \centering
    {\includegraphics[width = \columnwidth]{./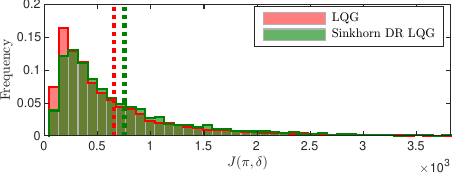}}
    \hfill
    {\includegraphics[width = \columnwidth]{./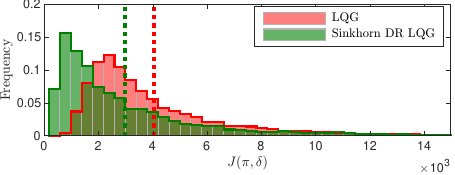}}
  \caption{Comparison between the control cost incurred by the nominal LQG controller (red histograms, in the background) and the proposed Sinkhorn DR LQG policy (green histograms, in the foreground) over $5000$ disturbance realizations drawn from the nominal distribution (on the left) and the respective worst-case distribution in the Sinkhorn ambiguity set (on the right). Dotted vertical lines represent theoretical mean values.}
  \label{fig:plots} 
\end{figure*}
\section{Conclusion}
In this work, we studied a DR generalization of classical LQG control where the noise distributions are unknown and belong to entropy-regularized Wasserstein or Sinkhorn ambiguity sets. We proved that, despite robustifying the objective in the probability space, nature's distributions retain a Gaussian form and hence linear policies remain globally optimal. We validated the effectiveness of our Sinkhorn DR LQG policy through numerical simulations showing improved robustness compared to classical LQG design. 
\bibliographystyle{IEEEtran}
\bibliography{reference}
\appendix
\section{Appendix}
\label{appendix}
We report below expressions for the stacked cost matrices $\mathbf{Q} \in\Symm_+^{d(T+1)}$ and $\mathbf{R}\in\Symm_{++}^{mT}$, and for the matrices $\mathbf{C} \in \R^{pT\times d(T +1)}$, $\mathbf{G} \in \R ^{d(T +1)\times d(T +1)}$, and $\mathbf{H} \in \R^{d(T +1)\times mT}$ encoding the system linear dynamics and measurement equations. Specifically, we have $\mathbf{Q} = \operatorname{blkdiag} (Q_0, \dots, Q_T)$, $\mathbf{R} = \operatorname{blkdiag} (R_0, \dots, R_{T-1})$ and 
\begin{equation*}
{\small \mathbf{C}=\setlength\arraycolsep{2pt}\begin{bmatrix}
C_0 & 0 &  &  & \\
 & C_1 & \ddots & & \\
 &  & \ddots & \ddots & \\
 &  & & C_{T-1} & 0
\end{bmatrix}\,, \quad
\mathbf{G}=\begin{bmatrix}
    A^0_0 & & & \\
A_0^1 & A_1^1 & & \\
\vdots & & \ddots & \\
A^{T}_0 & A^{T}_1 & \dots & A^T_T
\end{bmatrix}\,,}
\end{equation*}
\begin{equation*}
{\small
\mathbf{H}=
\begin{bmatrix}
0 & & & & \\
A^1_1B_0 & 0 & & & \\
A_1^2B_0 & A_2^2B_1 & 0 & & \\
\vdots & & & \ddots & \\
\vdots & & & & 0 \\
A^{T}_1 B_0 & A^{T}_2 B_1 & \dots & \dots & A^T_TB_{T-1}
\end{bmatrix},}
\end{equation*}
where $A^t_s = \prod\limits_{k=s}^{t-1} A_k$ for any $s<t$ and $A_s^t = I$ for $s=t$.
\begin{proposition}
\label{prop:SDP_reformulation}
The optimization problem \eqref{eq:trace_lowerbound} is equivalent to the conic program
\begin{align*}
\max \ 
&\Tr(\mathbf{G^{\top} Q G W}) - 
\Tr\big(\mathbf{M}^{-1} F\big)
\\
\text{s.t.} \
&\Gamma \in \mathcal{M}, \ 
F \in \mathbb{S}_{+}^{mT},\mathbf{W} \in \mathbb{S}_{++}^{d(T+1)}, \ 
\mathbf{V} \in \mathbb{S}_{++}^{pT},
\\[1pt]
& E_{x_0} \in \mathbb{S}_{+}^{d}, \ 
E_{w_t} \in \mathbb{S}_{+}^{d}, \ 
E_{v_t} \in \mathbb{S}_{+}^{p}, \ \forall t \in [T],
\\[2pt]
&\begin{aligned}
&\Tr\!\left(\!W_0\! -\! 2E_{x_0}\!+\!\frac{\epsilon}{2}\Sigma^{-1}W_0\!\right)\! -\! \frac{\epsilon}{2}\log\left|E_{x_0}\! - \frac{\epsilon}{4}I\right| \leq \rho_{x_0}
\\
&- \Tr(\hat{X}_0) + \frac{\epsilon}{2}\left(d\log\left(\frac{\epsilon}{2}\right)-\log|\hat{X}_0\Sigma|\right),
\end{aligned}
\\[2pt]
&\begin{aligned}
&\Tr\left(\!W_{t+1}\!\! -\! 2E_{w_t}\!\!+\!\frac{\epsilon}{2}\Sigma^{-1}W_{t+1}\!\right)\!\! -\! \frac{\epsilon}{2}\!\log\!\left|E_{w_t}\!\! -\! \frac{\epsilon}{4}I\right|\! \!\leq\! \rho_{w_t}
\\
&- \Tr(\hat{W}_t) + \frac{\epsilon}{2}\left(d\log\left(\frac{\epsilon}{2}\right)-\log|\hat{W}_t\Sigma|\right),\ \forall t \in [T]
\end{aligned}
\\[2pt]
&\begin{aligned}
&\Tr\left(V_{t}\! -\! 2E_{v_t}\!+\!\frac{\epsilon}{2}\Sigma^{-1}V_{t}\right)\!-\! \frac{\epsilon}{2}\log\left|E_{v_t}\! -\! \frac{\epsilon}{4}I\right| \leq\rho_{v_t}
\\
&- \Tr(\hat{V}_t) + \frac{\epsilon}{2}\left(d\log\left(\frac{\epsilon}{2}\right)-\log|\hat{V}_t\Sigma|\right),\ \forall t \in [T]
\end{aligned}
\\[2pt]
& 
\begin{bmatrix}
\hat{X}_0^{1/2} W_0 \hat{X}_0^{1/2} +\frac{\epsilon^2}{16}I_n & E_{x_0} \\
E_{x_0} & I_n
\end{bmatrix} \succeq 0, 
\\[1pt]
& 
\begin{bmatrix}
\hat{W}_t^{1/2} W_{t+1} \hat{W}_t^{1/2} +\frac{\epsilon^2}{16}I_n & E_{w_t} \\
E_{w_t} & I_n
\end{bmatrix} \succeq 0,\ \forall t \in [T],
\\[1pt]
& 
\begin{bmatrix}
\hat{V}_t^{1/2} V_t \hat{V}_t^{1/2}  +\frac{\epsilon^2}{16}I_p& E_{v_t} \\
E_{v_t} & I_p
\end{bmatrix} \succeq 0, 
\ \forall t \in [T],
\\[1pt]
& 
\setlength\arraycolsep{.9pt}
\begin{bmatrix}
F & \mathbf{H^{\top} Q G W D^{\top}} + \Gamma/2
\\[1pt]
(\mathbf{H^{\top} Q G W D^{\top}} + \Gamma/2)^{\top} & \mathbf{D W D^{\top} + V}
\end{bmatrix}
\!\!\succeq\! 0\,.
\end{align*}
\noindent
Here, $\mathcal{M}$ denotes the set of all strictly upper block triangular matrices of the form
\[
\Gamma = \begin{bmatrix}
0 & \Gamma_{1,2} & \Gamma_{1,3} & \cdots & \Gamma_{1,T} \\
 & 0 & \Gamma_{2,3} & \cdots & \Gamma_{2,T} \\
 &  & \ddots & \ddots & \vdots \\
&  &  & 0 & \Gamma_{T-1,T} \\
 & &&& 0
\end{bmatrix}
\in \mathbb{R}^{Tm \times Tp}\,,
\]
where $\Gamma_{t,s} \in \mathbb{R}^{m \times p}$ for every $t,s \in \mathbb{Z}$ with $1 \le t < s \le T$.
\end{proposition}
\begin{proof}
The proof relies on dualizing the inner minimization problem in \eqref{eq:trace_lowerbound}. Strong duality holds because the primal problem is feasible and involves only equality constraints, hence any feasible point is in fact a Slater point.

In the following we use $\Gamma \in \mathcal{M}$ to denote the Lagrange multiplier of the constraint $\bf U \in \mathcal{U}^{\text{lin}}$, 
which requires all blocks of $\bf U$ above the main diagonal to vanish. 
\noindent
The Lagrangian function of the inner minimization problem in \eqref{eq:trace_lowerbound} is
\begin{equation*}
\mathcal{L}(\mathbf{q}, \mathbf{U}, \Gamma) 
= \Tr\left(\mathbf{F}_{1} \mathbf{W}+\mathbf{F}_{2} \mathbf{V}\right)+\mathbf{q}^{\top} \mathbf{M} \mathbf{q} + \Tr(\mathbf{U}\Gamma^\top)
\end{equation*}
\noindent
Recall now that $\mathbf{R} \succ 0$ and $\mathbf{Q} \succeq 0$, and thus $\mathbf{M} \succ 0$. 
Consequently, $\mathcal{L}$ is minimized by $\mathbf{q}^{\star} = 0$ for any fixed $\bf U$ and $\Gamma$. 
In addition, the partial gradient of $\mathcal{L}$ with respect to $\bf U$ is given by
\[
\frac{\partial \mathcal{L}}{\partial \bf U}
= 2 \mathbf{M U D W D}^{\top} 
 + 2 \mathbf{M U V}
 + 2 \mathbf{H^\top QGWD^\top} + \Gamma\,.
\]
\noindent
Recall also that $\mathbf{V} \in \mathcal{G}_{V}$ is strictly positive, which implies that $\mathbf{D W D^{\top} }+ \mathbf{V} \succ 0$ is invertible. 
As we already know that $\mathbf{M}\succ 0$ is invertible as well, 
$\mathcal{L}$ is minimized by
\[
\mathbf{U}^{\star} 
= -\mathbf{M}^{-1} 
   (\mathbf{H^{\top} Q G W D^{\top}} + \Gamma/2)
   (\mathbf{D W D^{\top} + V})^{-1}
\]
for any fixed $\Gamma$.

\noindent
Substituting both $\bf q^{\star}$ and $\bf U^{\star}$ into $\mathcal{L}$ yields the dual objective function
\[
\begin{aligned}
&g(\Gamma)
= \mathcal{L}(\mathbf{q^{\star}, U}^{\star}, \Gamma)
= \Tr(\mathbf{G^{\top} Q G W})+
\\
&
\begin{multlined}
- \Tr\Big(
    (\mathbf{M}^{-1}
    (\mathbf{H^{\top} Q G W D^{\top}} + \Gamma/2)
    (\mathbf{D W D^{\top} + V})^{-1}
    \\
    (\mathbf{H^{\top} Q G W D^{\top}} + \Gamma/2)^{\top}
   \Big)\,.
\end{multlined}
\end{aligned}
\]

\noindent
The dual of the inner minimization problem in \eqref{eq:trace_lowerbound} is thus given by 
$\max_{\Gamma \in \mathcal{M}} g(\Gamma)$.
To linearize the dual function $g(\Gamma)$, we introduce the auxiliary variable $F \in \mathbb{S}_{+}^{mT}$ 
subject to the matrix inequality
\[
\begin{multlined}
F \succeq 
  (\mathbf{H^{\top} Q G W D^{\top}} + \Gamma/2)
    (\mathbf{D W D^{\top} + V})^{-1}
    \\
    (\mathbf{H^{\top} Q G W D^{\top}} + \Gamma/2)^{\top}\,.
\end{multlined}
\]
We can reformulate the previous inequality as an LMI using the Schur complement. Then, the dual problem rewrites as
\begin{equation}
\label{eq:inner_dual}
\begin{aligned}
\max \ \ &\Tr(\mathbf{G^{\top} Q G W}) - 
\Tr\big(\mathbf{M}^{-1} F\big)
\\[1pt]
\text{s.t.} \ \ & 
\Gamma \in \mathcal{M}, \ 
F \in \mathbb{S}_{+}^{mT},
\\[1pt]
& 
\!\!\setlength\arraycolsep{.9pt}
\begin{bmatrix}
F & \mathbf{H^{\top} Q G W D^{\top}} + \Gamma/2
\\[1pt]
(\mathbf{H^{\top} Q G W D^{\top}} + \Gamma/2)^{\top} & \mathbf{D W D^{\top} + V}
\end{bmatrix}
\!\!\succeq\!\!0\,.
\end{aligned}
\end{equation}
Substituting the strong dual \eqref{eq:inner_dual} in the inner minimization in \eqref{eq:trace_lowerbound}, we obtain
\begin{align*}
  \max \ \ &\Tr(\mathbf{G^{\top} Q G W}) - 
\Tr\big(\mathbf{M}^{-1} F\big)
\\[1pt]
\text{s.t.} \ \ & 
\Gamma \in \mathcal{M}, \ 
F \in \mathbb{S}_{+}^{mT},\mathbf{W} \in \mathbb{S}_{++}^{d(T+1)}, \ 
\mathbf{V} \in \mathbb{S}_{++}^{pT},
\\[1pt]
& 
\!\!\setlength\arraycolsep{.9pt}
\begin{bmatrix}
F & \mathbf{H^{\top} Q G W D^{\top}} + \Gamma/2
\\[1pt]
(\mathbf{H^{\top} Q G W D^{\top}} + \Gamma/2)^{\top} & \mathbf{D W D^{\top} + V}
\end{bmatrix}
\!\!\succeq\! 0,
\\[1pt]
&G_\epsilon(\hat{X}_0, W_0) \leq \rho_{x_0},
\\
&G_\epsilon(\hat{W}_{t+1}, W_{t+1}) \leq \rho_{w_t},\ G_\epsilon(\hat{V}_t, V_t) \leq \rho_{v_t} \ \forall t \in [T]\,.
\end{align*}
We conclude the proof by reformulating the entropy-regularized Gelbrich constraints. For simplicity, consider the constraint $G_\epsilon(\hat{X}_0, W_0) \leq \rho_{x_0}$; the reasoning applies \textit{mutatis mutandis} to the others. Consider an auxiliary variable $E_{x_0}\in\Symm^n_{+}$ subject to the matrix inequality $E_{x_0}^2 \preceq \hat{X}_0^{1/2} W_0 \hat{X}_0^{1/2} +\frac{\epsilon^2}{16}I_n$. This inequality can be recast as $E_{x_0} \preceq (\hat{X}_0^{1/2} W_0 \hat{X}_0^{1/2} +\frac{\epsilon^2}{16}I_n)^\frac{1}{2}$. With this auxiliary variable, we can equivalently express the nonlinear inequality constraint as
\begin{equation*}
\begin{multlined}
   \Tr\left(W_0 - 2E_{x_0}+\frac{\epsilon}{2}\Sigma^{-1}W_0\right) - \frac{\epsilon}{2}\log\left|E_{x_0} - \frac{\epsilon}{4}I\right| \leq \rho_{x_0}
   \\
   - \Tr(\hat{X}_0) + \frac{\epsilon}{2}\left(d\log\left(\frac{\epsilon}{2}\right)-\log|\hat{X}_0\Sigma|\right)\,.
\end{multlined}
\end{equation*}
If the constraint is satisfied with $E_{x_0}$ it will be also with the original variable since the trace and the log-determinant are monotone operators. Finally, with a Schur complement argument the inequality $E_{x_0}^2 \preceq \hat{X}_0^{1/2} W_0 \hat{X}_0^{1/2} +\frac{\epsilon^2}{16}I_n$ can be equivalently rewritten as
\begin{equation*}
    \begin{bmatrix}
\hat{X}_0^{1/2} W_0 \hat{X}_0^{1/2} +\frac{\epsilon^2}{16}I_n & E_{x_0} \\
E_{x_0} & I_n
\end{bmatrix} \succeq 0\,.
\end{equation*}
\end{proof}
\end{document}